%
%
%
%
%
\RequirePackage{fix-cm}
\documentclass{svjour3}                     
\smartqed  
\usepackage{graphicx}
\usepackage{mathptmx}      
%

\usepackage{amsmath}
\usepackage{amssymb}
\usepackage{xspace}


\newcommand{\OPT}{\ensuremath{\operatorname{\textsc{Opt}}}\xspace}
\newcommand{\OPTALG}{\ensuremath{\operatorname{\textsc{Opt}_{\ALG}}}\xspace}
\newcommand{\ALG}{\ensuremath{\operatorname{\textsc{Alg}}}\xspace}
\newcommand{\GREEDY}{\ensuremath{\operatorname{\textsc{Greedy}}}\xspace}
\newcommand{\UFF}{\ensuremath{\operatorname{\textsc{Uff}}}\xspace}
\newcommand{\FAST}{\ensuremath{\operatorname{\textsc{Fast}}}\xspace}

\newcommand{\eps}{\ensuremath{\varepsilon}\xspace}

\journalname{Journal of Scheduling}
\begin{document}

\title{Online-Bounded Analysis\thanks{Supported in part by the Danish Council
for Independent Research, Natural Sciences, and the Villum Foundation.
A preliminary version of this paper appeared in the
{\it Eleventh International Computer Science Symposium in Russia},
Lecture Notes in Computer Science, vol.~9691,
Springer, 2016, pp.~131--145.
}}



\author{Joan Boyar
        \and Leah Epstein
        \and Lene M. Favrholdt
        \and Kim S. Larsen
        \and Asaf Levin
}

\authorrunning{Boyar, Epstein, Favrholdt, Larsen, Levin}

\institute{Joan Boyar \at
              Dept.\ of Mathematics and Computer Science \\
              University of Southern Denmark \\
              Odense, Denmark. \\
              \email{joan@imada.sdu.dk}
           \and
           Leah Epstein \at
              Dept.\ of Mathematics \\
              University of Haifa \\
              Haifa, Israel. \\
              \email{lea@math.haifa.ac.il}
           \and
           Lene M. Favrholdt \at
              Dept.\ of Mathematics and Computer Science \\
              University of Southern Denmark \\
              Odense, Denmark. \\
              \email{lenem@imada.sdu.dk}
           \and
           Kim S. Larsen \at
              Dept.\ of Mathematics and Computer Science \\
              University of Southern Denmark \\
              Odense, Denmark. \\
              \email{kslarsen@imada.sdu.dk}
           \and
           Asaf Levin \at
              Faculty of IE\&M \\
              The Technion \\
              Haifa, Israel. \\
              \email{levinas@ie.technion.ac.il}
}

\date{Received: date / Accepted: date}

\maketitle

\begin{abstract}
Though competitive analysis is often a very good tool for the
analysis of online algorithms, sometimes it does not give any
insight and sometimes it gives counter-intuitive results.
Much work has gone into exploring other performance measures,
in particular targeted at what seems to be the core problem
with competitive analysis: the comparison of the performance
of an online algorithm is made with respect to a too powerful adversary.
We consider a new approach to restricting the power
of the adversary, by requiring that when judging a given
online algorithm,
the optimal offline algorithm must perform at least as well as the online
algorithm, not just on the entire final request sequence, but also
on any prefix of that sequence.
This is limiting the adversary's usual advantage
of being able to exploit that it knows the sequence is continuing
beyond the current request.
Through a collection of online problems, including
machine scheduling, bin packing, dual bin packing, and seat reservation,
we investigate the significance of this particular offline advantage.
\end{abstract}

\maketitle

\section{Introduction}
\label{sec:intro} An \emph{online problem} is an optimization
problem where requests from a sequence~$I$ are given one at a
time, and for each request an irrevocable decision must be made
for it before the next request is revealed. For a minimization
problem, the goal is to minimize some cost function, and if \ALG
is an online algorithm, we let $\ALG(I)$ denote this cost on the
request sequence~$I$. Similarly, for a maximization problem, the
goal is to maximize some value function (also known as profit),
and in this case, $\ALG(I)$ is the profit of an online algorithm
\ALG.

\subsection{Performance Measures}
\label{sec:measures}
Competitive analysis~\cite{ST85,KMRS88} is the most common tool
for comparing online algorithms.
For a minimization problem,
an online algorithm is \emph{$c$-competitive}
if there exists a constant~$\alpha$ such that for all input
sequences~$I$, $\ALG(I)\leq c\OPT(I)+\alpha$.
Here, $\OPT$ denotes an optimal offline algorithm.
As usual, the term ``offline'' is just used for emphasis,
since most algorithms we discuss are online.
The (asymptotic) \emph{competitive ratio} of \ALG
is the infimum over all such~$c$.
Similarly, for a maximization problem, an online algorithm is
\emph{$c$-competitive} if there exists a constant~$\alpha$
such that for all input
sequences~$I$, $\ALG(I)\geq c\OPT(I)-\alpha$.
The (asymptotic) \emph{competitive ratio} of \ALG
is the supremum over all such~$c$.
In both cases,
if the inequality can be established using~$\alpha=0$, we refer to the
result as being \emph{strict}
(some authors use the terms \emph{absolute} or \emph{strong}).
Note that for maximization problems,
we use the convention of competitive ratios smaller than~$1$.

For many online problems, competitive analysis gives useful and
meaningful results. However, researchers also realized from the
very beginning that this is not always the case: Sometimes
competitive analysis does not give any insight and sometimes it
even gives counter-intuitive results, in that it points to the
worse of two algorithms as the better one (in the sense that the
common belief is that one of the two algorithms is worse, or even
that experimental studies provide clear evidence that this is the
case). A recent list of examples with references can be found
in~\cite[p.~289]{EKL13}. Much work has gone into exploring other
performance measures, in particular targeted at what seems to be
the core problem with competitive analysis: the comparison of the
performance of an online algorithm is made with respect to a too
powerful adversary.

Four main techniques for addressing this have been employed,
sometimes in combination. We discuss these ideas below.
No chronological order is implied by the order the techniques are
presented in.
First, one could completely eliminate the optimal offline algorithm by
comparing algorithms to each other directly. Measures taking this approach
include max/max analysis~\cite{BDB94},
relative worst order analysis~\cite{BF07},
bijective and average analysis~\cite{ADLO07}, and
relative interval analysis~\cite{DLM09}.

Second, one could limit the resources of the optimal offline algorithm,
or correspondingly increase the resources of the online algorithm,
as is done in extra resource analysis~\cite{KP00,ST85}.
Thus, the offline algorithm's knowledge of the future is counter-acted
by requiring that it solves a harder version of the problem than the online
algorithm.
Alternatively, the online algorithm could be given limited knowledge
of the future in terms of some form of look-ahead, as  has been done for
paging. In those set-ups, one assumes that the online
algorithm can see a fixed number $\ell$ of future requests, though it
varies whether it is simply the next $\ell$ requests, or, for instance,
the next $\ell$ expensive requests~\cite{Y91},
the next $\ell$ new requests~\cite{B98},
or the next $\ell$ distinct requests~\cite{A97}.

Third, one could limit the adversary's control over exactly which
sequence is being used to give the bound by grouping sequences
and/or considering the expected value over some set as has been done
with the
statistical adversary~\cite{R91},
diffuse adversary~\cite{KPn00},
random order analysis~\cite{K96},
worst order analysis~\cite{BF07},
Markov model~\cite{KPR00}, and
distributional adversary~\cite{Giannakopoulos15}.

Finally, one could limit the adversary's choice of sequences it is allowed
to present to the online algorithm.
An early approach to this, which at the same time addressed
issues of locality of reference, was the access graph model~\cite{BIRS95},
where a graph defines which requests are allowed to follow each other.
Another locality of reference approach was taken in~\cite{AFG02},
limiting the maximum number of different requests allowed within some
fixed-sized sliding window.
Both of these models were targeted at the paging problem, and the
techniques are not meant to be generally applicable to online algorithm
analysis.
A resource-based approach is taken in~\cite{BL99}, where only sequences
that could be fully accommodated given some resource are considered,
eliminating some pathological worst-case sequences.
A generalization of this, where the competitive ratio is found in
the limit, appears in~\cite{BLN01,BFLN03}.
All of these approaches are aimed at removing pathological sequences
from consideration such that the worst-case (or expected case)
behavior is taken over a smaller and more realistic set of sequences,
thereby obtaining results aligning better with observed behavior
in practice. A similar concept for scheduling problems is the ``known-\OPT''
model, where the cost of an optimal offline solution is known
in advance~\cite{AR01}.
Finally, loose competitive analysis~\cite{Y94} allows for
a set of sequences, asymptotically smaller than the whole
infinite set of input sequences, to be disregarded,
while the remaining sequences should either be $c$-competitive
or have small cost. In this way, infrequent pathological
as well as unimportant (due to low cost) sequences can be eliminated.

\subsection{Online-Bounded Analysis}
\label{subsec:bounded}
Much work has been done in all of the four categories mentioned above.
In this paper, we consider a new approach to restricting the power
of the adversary that does not really fit into any
of the known categories.
Given an online algorithm, we require that
the optimal offline algorithm perform at least as well as the online
algorithm, not just on the entire final request sequence, but also
on any prefix of that sequence.
In essence, this is limiting the adversary's usual advantage
of being able to exploit that it knows the sequence is continuing
beyond the current request, without completely eliminating this advantage.
Since the core of the problem of the adversary's strength is its
knowledge of the future, it seems natural to try to limit that
advantage directly.

This new measure is generally applicable to online problems, since
it is only based on the objective function.
Comparing with other measures,
it is a new element that the behavioral restriction imposed on
the optimal offline algorithm is determined by the online algorithm,
which is the reason we name this technique \emph{online-bounded analysis}.
It is adaptive in the sense that online algorithms attempting non-optimal
behavior face increasingly harder conditions from the adversary the farther
the online algorithm goes in the direction of non-optimality (on prefixes).
The measure judges greediness more positively than does competitive analysis,
since making greedy choices limits the adversary's options more, so the
focus shifts towards the quality of a range of greedy or near-greedy decisions.

Behavioral restrictions on the optimal offline algorithm
have been seen before, as
in~\cite{CLLLT11}, where it is used as a tool to arrive at the final
result. Here they first show a $O(1)$-competitive result against
an offline algorithm restricted to, among other things,
using shortest remaining processing time for job selection.
Later they show that this gives rise to a schedule at most three
times as bad as for an unrestricted offline algorithm.
Thus, the end goal is the usual competitive ratio, and the
restriction employed in the process is problem specific.

\subsubsection{The Definition}
\label{subsubsec:new}
We start by giving the definitions for a minimization problem.

If $I$ is an input sequence for some minimization problem
and $A$ is a deterministic online algorithm for this problem,
we let $A(I)$ denote the objective function value returned by~$A$
on the input sequence~$I$.

We let $\OPT_A$ denote an offline algorithm which is optimal
\emph{under the restriction} that it is \emph{not}
allowed to be worse than~$A$ on any prefix of the input sequence~$I$
being considered.

Thus, for any sequence~$I$, for which we want to determine
$\OPT_A(I)$, it must hold for all prefixes~$I'$ of~$I$ that
$\OPT_A(I') \leq A(I')$. Additionally, no offline algorithm with
that property is strictly better than $\OPT_A$ on $I$. If these
conditions are fulfilled, we say that $\OPT_A$ is an
\emph{online-bounded optimal solution} (for~$A$).

If for some constants, $b$ and~$c$,
it holds for all sequences~$I$ that $A(I) \leq c \OPT_A(I)+b$,
then we say that $A$ has an \emph{online-bounded ratio} of at most~$c$.
The online-bounded ratio of~$A$ is the infimum over all such~$c$.

For a maximization problem, the requirement is instead that
$\OPT_A(I') \geq A(I')$, and
if for some constants, $b$ and~$c$,
it holds for all sequences~$I$ that $A(I) \geq c \OPT_A(I)-b$,
then we say that $A$ has an \emph{online-bounded ratio} of at least~$c$.
The online-bounded ratio of~$A$ is the supremum over all such~$c$.

For maximization problems, it varies whether
authors use ratios greater than or smaller than one.
Note that with our definitions, an online-bounded ratio for a
minimization problem is at least~$1$,
while this ratio for a maximization problem is at most~$1$.

Just as with the competitive ratio, one could also define a strict
variant of the online-bounded ratio with $b=0$.
For the scheduling problems considered in this paper, this would not
change the results. As has also been observed for competitive
analysis, for any constant $b$, the job sizes of a worst-case input can
be scaled such that allowing the additive constant $b$ makes no
difference.
Thus, for simplicity, in Sections~\ref{sec:scheduling}
and~\ref{sec:santa},
we assume $b=0$. When considering the seat reservation
problem, the total number of reservations accepted is bounded by a constant,
so we also assume that $b=0$ in Section~\ref{sec:unit}.

\subsection{Results}
\label{subsec:results} Through a collection of online problems,
including machine scheduling, bin packing, dual bin packing, and
seat reservation, we investigate the workings of online-bounded
analysis. The large collection of measures that have been defined
indicates that there is no universal measure which is the best
choice for all problems. With our approach, we try to learn more
about the nature of online problems, greediness, and robustness.
As a first approach, we study our new idea in the simplest
possible setting, and leave it for future work to investigate if
our approach works best in isolation or in combination with ideas
from other measures.

First, we observe that some results from competitive analysis carry over.
Then we note that some problem characteristics imply that a greedy algorithm
is optimal.

For machine scheduling, we obtain the following results. For
minimizing make\-span on $m\geq 2$ identical machines, we get an
online-bounded ratio of $2-\frac{1}{m-1}$ for \GREEDY. Though this
is smaller than the competitive ratio of
$2-\frac{1}{m}$~\cite{G66}, it is a comparable result,
demonstrating that non-greedy behavior is not the key to the
adversary performing better by a factor close to $2$ for
large~$m$. Two machines are called uniformly related if there
exists a fixed factor~$s$ such that one machine is $s$ times
faster than the other, that is, the two machines have speeds, and
$s$ is the ratio between those speeds, also called the speed ratio.
For two uniformly related machines, we prove that \GREEDY has
online-bounded ratio~$1$. This is consistent with competitive
ratio results, where \GREEDY has been proven
optimal~\cite{ENSSW01,CS80} with competitive ratio $1+\min\{\frac
1s, \frac s{s+1} \}$, if the speed ratio is $s$. For the case
where the faster machine is at least~$\phi$ (the golden ratio)
times faster than the slower machine, competitive analysis finds
that \GREEDY and \FAST, the algorithm that only uses the faster
machine, are equally good. Using relative worst order analysis,
\GREEDY is deemed the better algorithm~\cite{EFK06}, which seems
reasonable since \GREEDY is never worse on any sequence than
\FAST, and sometimes better. We also obtain this positive
distinction, establishing the online-bounded ratios~$1$ and
$\frac{s+1}{s}$
for \GREEDY and \FAST,
respectively.

For the Santa Claus machine scheduling problem~\cite{BS06},
we prove that \GREEDY
is optimal for identical machines with respect to the online-bounded ratio.
For two related machines with speed ratio~$s$, we present an algorithm with
an online-bounded ratio better than~$\frac1s$
and show that no online algorithm has a higher online-bounded ratio.
For this problem, it is known that the best possible competitive ratio
for identical machines is~$\frac 1m$, and the best possible competitive ratio
for two related machines is~$\frac1{s+1}$~\cite{Woe97,AE98,Eps05}.

For classic bin packing, we show that any Any-Fit algorithm has an
 online-bound\-ed ratio of at least~$\frac{3}{2}$.
We observe that for bin covering, the best online-bounded ratio
is equal to the best competitive ratio~\cite{CT88}.
For these problems, asymptotic measures are used.

We show a connection between results concerning the competitive
ratio on accommodating sequences
(that is, sequences where \OPT packs all items)
and the online-bounded ratio.
For dual bin packing (namely, the multiple knapsack problem with equal
capacity knapsacks and unit value items),
we show that the online-bounded ratio is the
same as the competitive ratio on accommodating sequences for a large
class of algorithms including First-Fit, Best-Fit, and Worst-Fit.
It then follows from results in~\cite{BFLN03} that any algorithm in
this class has an online-bounded ratio of at least~$\frac12$.
Furthermore, the online-bounded ratio of First-Fit and Best-Fit is~$\frac58$,
and that of Worst-Fit is~$\frac12$.
We also note that, for any dual bin packing algorithm, an upper bound
on the competitive ratio on accommodating sequences is also an upper
bound on the online-bounded ratio.
Using a result from~\cite{BFLN03}, this implies that any
algorithm has an online-bounded ratio of at most~$\frac67$.

For seat reservation, we have preliminary results, and conjecture
that results are similar to machine scheduling for identical machines,
in that ratios similar to but slightly better than those obtained
using competitive analysis can be established.

We found that the new measure sometimes leads to the same results
as the standard competitive ratio, and in some cases it leads to
an online-bounded ratio of~$1$.
However, there are problem variants for which we obtain an
intermediate value, which confirms the relevance of our approach.

\section{Online-Bounded Analysis}
\label{sec:bounded}
Before considering concrete problems, we discuss some generic properties.

\subsection{Measure Properties}
\label{model}
The online-bounded ratio of an algorithm is never further away from~$1$
than the competitive ratio, since the online algorithm's performance
is being compared to a (possibly) restricted, optimal algorithm.

Since algorithms are compared with different optimal algorithms,
one might be concerned that two algorithms, $A$ and~$O$, could
have online-bounded ratio~$1$, and yet one algorithm could do better on
some sequences than the other.
However, if both algorithms have online-bounded ratio~$1$, there is no
point where one
algorithm makes a decision which changes the objective value
more than the other does, since the adversary could end the
sequence there and the one algorithm with the worst objective value
would not have online-bounded ratio~$1$.
Thus, both algorithms have the same objective function value at
all points, so they always compete against the same adversary.
Thus, if algorithm $A$ performs better than algorithm $O$ on any
input sequence, then
algorithm~$O$ does not have online-bounded ratio~$1$.

For some problems, such as paging, $\OPT_A$ is the same as \OPT
under competitive analysis for all algorithms~$A$, because \OPT's
behavior on any sequence is also optimal on any prefix of that
sequence. Thus, the
competitive analysis results for paging and similar problems
also hold with this measure, giving the same online-bounded ratio
as competitive ratio.

\subsection{$\GREEDY$ is Sometimes Optimal}
\label{subsec:greedy} It is sometimes the case that there is one
natural greedy algorithm that always has a unique greedy choice in
each step. In such situations, the greedy algorithm is optimal
with respect to this measure, having online-bounded ratio~$1$. For
example, consider weighted matching in a graph where the edges
arrive in an online fashion (the edge-arrival model) and the
algorithm in each step decides if the current edge is added to the
matching or discarded. Here, the greedy algorithm, denoted by
$\GREEDY$, adds the current edge if adding the edge will keep the
solution feasible (that is, its two end-vertices are still exposed
by the matching that the algorithm created so far) and the weight
of the edge is strictly positive. Note that indeed the
online-bounded ratio of $\GREEDY$ is~$1$, as the solution
constructed by $\OPT_{\GREEDY}$ must coincide with the solution
created by $\GREEDY$. The last claim follows by a trivial
induction on the number of edges considered so far by both
$\GREEDY$ and $\OPT_{\GREEDY}$. If $\GREEDY$ adds the current
edge, then by the definition of $\OPT_{\GREEDY}$, we conclude that
$\OPT_{\GREEDY}$ adds the current edge. If $\GREEDY$ discards the
current edge because at least one of its end-vertices is matched,
then $\OPT_{\GREEDY}$ cannot add the current edge either (using
the induction assumption). Last, if $\GREEDY$ discards the current
edge since its weight is non-positive, then we can remove the edge
from the bounded optimal solution, $\OPT_{\GREEDY}$, if it was
added (removing it from $\OPT_{\GREEDY}$ will not affect the
future behavior of $\OPT_{\GREEDY}$ since $\OPT_{\GREEDY}$ must
accept an edge whenever $\GREEDY$ does). Similar proofs hold in
other cases when there is a unique greedy choice for $\OPT$ in
each step. Note that for the weighted matching problem where
vertices arrive in an online fashion and when a vertex arrives the
edge set connecting this vertex to earlier vertices is revealed
with their weights (the vertex-arrival model), the standard
negative result (the value of the negative result is smaller than
the ratio of the smallest strictly positive weight in the graph to
the largest) for weighted matching holds as can be seen in the
following construction. The first three vertices arrive in the
order~$1,2,3$ and when vertex~$3$ arrives, two edges
$\{1,3\},\{2,3\}$ are revealed each of which has weight of~$1$
(vertices~$1$ and~$2$ are not connected). At this point, an online
algorithm with a strictly positive online-bounded ratio must add
one of these edges to the matching. Then, either~$1$ or~$2$ are
matched in the current solution, and in the last step, vertex~$4$
arrives with an edge of weight~$M$ connecting~$4$ to the vertex
among~$1$ and~$2$ that was matched by the algorithm. Observe that
when vertex~$3$ arrives, the algorithm adds an edge to the
matching while the bounded optimal solution can add the other
edge, and this will allow the bounded optimal solution to add the
last edge as well.

The argument for the optimality of $\GREEDY$ for the weighted matching
problem in the edge-arrival model clearly holds if all weights are~$1$
also.
This unweighted matching problem in the edge-arrival model is
an example of a maximization problem in the online complexity class
Asymmetric Online Covering (AOC)~\cite{BFKM15}:
\begin{definition} \label{sgeasydef}\label{wdef}
An online accept-reject problem 
is in \emph{Asymmetric Online Covering (AOC)} if,
for the set~$Y$ of requests accepted, the following holds:

For minimization (maximization) problems,
the objective value of~$Y$ is $|Y|$ if $Y$ is
feasible and $\infty$ ($-\infty)$ otherwise, and any superset (subset)
of a feasible solution is feasible.
\end{definition}

For all maximization problems in the class AOC, there is an obvious greedy
algorithm, $\GREEDY$, which accepts a request whenever acceptance
maintains feasibility.
The argument above showing that the online-bounded ratio of
$\GREEDY$ is~$1$ for the weighted matching problem in the edge-arrival
model generalizes to all maximization problems in AOC.

\begin{theorem}
For any maximization problem in AOC, the online-bounded ratio of
$\GREEDY$ is~$1$. Thus, $\GREEDY$ is optimal according to
online-bounded analysis for online independent set in the vertex-arrival
model, unweighted matching in the edge-arrival model, and online
disjoint path allocation where requests are paths.
\end{theorem}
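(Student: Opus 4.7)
The plan is to generalize the induction argument sketched for weighted matching in the edge-arrival model. Fix an input sequence~$I$, write $I_t$ for its length-$t$ prefix, and write $G_t$ for the set of requests that $\GREEDY$ has accepted after processing~$I_t$. Let $S$ denote the set finally accepted by any algorithm satisfying the constraints defining~$\OPT_{\GREEDY}$, that is, $S$ is feasible and $|S \cap I_t| \geq |G_t|$ for every~$t$. The key claim I would prove is that $S \cap I_t = G_t$ for every~$t$; since the final $G_t$ has exactly $\GREEDY(I)$ elements, this immediately gives $\OPT_{\GREEDY}(I) = \GREEDY(I)$ and hence online bounded ratio~$1$.

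I would proceed by induction on~$t$, with the $t=0$ case trivial. For the inductive step, split on the request~$r$ arriving at step $t+1$. If $\GREEDY$ accepts~$r$, then $|G_{t+1}| = |G_t| + 1$, and combined with the inductive hypothesis $S \cap I_t = G_t$, the prefix constraint $|S \cap I_{t+1}| \geq |G_{t+1}|$ forces $r \in S$. If $\GREEDY$ rejects~$r$, then by the greedy rule for AOC maximization, $G_t \cup \{r\}$ must be infeasible; but the inductive hypothesis would give $G_t \cup \{r\} \subseteq S$, and the downward-closure property from the definition of AOC would then force $G_t \cup \{r\}$ to be feasible, a contradiction. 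Hence $r \notin S$, and the invariant is preserved.

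The main obstacle I expect is just correctly invoking the downward-closure of feasibility in the rejection case: this is the one place where the AOC structure (rather than a general online accept-reject problem) is used, and it is precisely the feature that rules out a counterexample of the vertex-arrival matching type discussed just before the theorem statement. Once the invariant is established, the three named applications follow at once, since independent sets, matchings, and edge-disjoint collections of paths are each closed under taking subsets, and the greedy rule in each case accepts a request exactly when doing so preserves feasibility.
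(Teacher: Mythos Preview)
Your proposal is correct and follows essentially the same approach as the paper: the paper does not give a separate proof of the theorem but simply states that the induction argument sketched for weighted matching in the edge-arrival model generalizes to all AOC maximization problems, and your proof is precisely that generalization written out carefully. Your explicit identification of the downward-closure property as the key ingredient in the rejection case is exactly the point the paper leaves implicit.
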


Note that this does not hold for all minimization problems in AOC.
For example, cycle finding in the vertex-arrival model, the problem
of accepting as few vertices as possible, but accepting enough so
that there is a cycle in the induced subgraph accepted, is AOC-Complete.
However, consider the first vertex requested in a graph with only
one cycle. $\GREEDY$ is forced to accept it, since the vertex could
be part of the unique
cycle, but $\OPT_{\GREEDY}$ will reject the vertex if it is not in that cycle.

However, there are online-bounded optimal greedy algorithms for
minimization problems in AOC, such as vertex cover, which
are \emph{complements} of maximization problems in AOC (independent set in
the case of vertex cover). By complement, we mean that set~$S$ is a
maximal feasible set in the maximization problem if and only if
the requests not in~$S$ are a feasible solution for the minimization
problem. The greedy algorithm in the case of these minimization
problems would be the algorithm that
accepts exactly those requests that $\GREEDY$ for the complementary
maximization problem rejects.

\section{Machine Scheduling: Makespan}
\label{sec:scheduling}
We study the load balancing problem of
minimizing makespan for online job scheduling without preemption.
We first consider $m$ identical machines, and analyze the classic greedy
algorithm (also known as list scheduling). At any point, \GREEDY
schedules the next job on a least loaded machine, where the load
of a machine is the sum of the sizes of all jobs assigned to it.
Since the machines are identical, ties can be
resolved arbitrarily without loss of generality. It is known that
the competitive ratio of \GREEDY is $2-\frac{1}{m}$~\cite{G66}.
With the more restricted optimal algorithm, we get a smaller value
of $2-\frac{1}{m-1}$ as the online-bounded ratio of \GREEDY. Any
algorithm assigns every job to run within a specific time window
of this job, and the completion time of a job is the ending point
of its window.

\begin{lemma}
\label{lemma-makespan-upper}
For the problem of minimizing makespan for online job scheduling
on $m$ identical machines,
\GREEDY has online-bounded ratio of at most~$2-\frac{1}{m-1}$.
\end{lemma}
\begin{proof}
Consider a sequence~$I$. Let $j$ be the first job in~$I$ that is completed
at the final makespan of $\GREEDY$, and assume that it has size~$w$.
Let $t$ and~$t'$ be the starting times of~$j$ in $\OPT_{\GREEDY}$ and $\GREEDY$,
respectively,
and let $\ell$ and~$\ell'$ be the makespans of $\OPT_{\GREEDY}$ and $\GREEDY$,
respectively,
just before the arrival of~$j$.
Let $V$ denote the sum of the sizes of the jobs in~$I$ just before $j$ arrives.

We have the following inequalities:
\begin{itemize}
\item $\OPT_{\GREEDY}(I) \geq t+w$
\item $\OPT_{\GREEDY}(I) \geq \ell$
\end{itemize}
In addition, since, just before $j$ arrived,
the machine where $\OPT_{\GREEDY}$ placed
$j$ had load~$t$ and the other machines had load at most~$\ell$,
$V\leq t+(m-1)\ell$. Since $m-1\geq 1$, $V\leq (m-1)(t+\ell)$.

Because $\GREEDY$ placed~$j$ on its least loaded machines,
all machines had load at least~$t'$ before $j$ arrived. At least one
machine had load $\ell'$, so $V\geq (m-1)t'+\ell'$.
By the definition of online-bounded analysis,~$\ell \leq \ell'$.
Thus, $V\geq (m-1)t'+\ell$.
Combining the upper and lower bounds on~$V$ gives
$(m-1)t'\leq (m-1)t+(m-2)\ell$ and $t' \leq t+\frac{m-2}{m-1}\ell$.
We now bound $\GREEDY$'s makespan:
\[
\begin{array}{lcl}
\GREEDY(I) & = & t'+w
 =  (t'-t)+(t+w) \\
& \leq  & \left(\frac{m-2}{m-1}\right)\cdot\ell + \OPT_{\GREEDY}(I)
 \leq   \left(2-\frac{1}{m-1}\right)\OPT_{\GREEDY}(I)
\end{array}
\]
\mbox{}\qed
\end{proof}

\begin{lemma}
\label{lemma-makespan-lower}
For the problem of minimizing makespan for online job scheduling
on~$m$ identical machines,
\GREEDY has online-bounded ratio of at least~$2-\frac{1}{m-1}$.
\end{lemma}
\begin{proof}
The adversarial sequence~$I$ consists of one job of size~$m-1$, followed by
$(m-1)(m-2)$ jobs of size~$1$, and finally one job of size~$m-1$.
Clearly, $\GREEDY(I)=2m-3$.

Since the makespan of \GREEDY after the first job is~$m-1$,
$\OPT_{\GREEDY}$ is allowed to schedule the $(m-1)(m-2)$ jobs
on $m-2$ machines,
all of which are different from the machine getting the first job,
until $m-1$ machines all have load~$m-1$.
This leaves one machine for the final job, and gives a final makespan of~$m-1$.
The online-bounded ratio becomes $\frac{2m-3}{m-1}=2-\frac{1}{m-1}$.
\mbox{}\qed
\end{proof}

By Lemmas~\ref{lemma-makespan-upper} and~\ref{lemma-makespan-lower},
we find the following.
\begin{theorem}\label{makespanthm}
For the problem of minimizing makespan for online job scheduling
on $m$ identical machines,
\GREEDY has online-bounded ratio~$2-\frac{1}{m-1}$.
\end{theorem}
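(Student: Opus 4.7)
The theorem is an immediate consequence of combining the upper bound from Lemma~\ref{lemma-makespan-upper} with the matching lower bound from Lemma~\ref{lemma-makespan-lower}. Since both are already available in the excerpt, the proof reduces to a one-line citation. Nonetheless, it is worth outlining the strategy behind each half, since my approach from scratch would be to establish both directions independently.

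For the upper bound, the plan is a volume-counting argument focused on the critical job~$j$ whose completion time determines \GREEDY's final makespan. One would bound the total work processed before $j$ arrives in two ways: from $\OPT_{\GREEDY}$'s side using its starting time $t$ for $j$ and its current makespan $\ell$ (giving $V \le t + (m-1)\ell$), and from \GREEDY's side using its starting time $s$ and makespan $\ell'$ (giving $V \ge (m-1)s + \ell'$). The leverage that distinguishes online bounded analysis from standard competitive analysis is the prefix constraint $\ell \le \ell'$, which says $\OPT_{\GREEDY}$ cannot have a worse makespan than \GREEDY at the moment $j$ arrives. Substituting and rearranging tightens the resulting bound from $2-\frac{1}{m}$ to $2-\frac{1}{m-1}$.

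For the lower bound, the plan is a three-phase adversarial sequence: one job of size $m-1$, then $(m-1)(m-2)$ unit jobs, and finally another job of size $m-1$. \GREEDY spreads the unit jobs evenly while $\OPT_{\GREEDY}$ stacks them onto the $m-2$ machines that never received the first job, reaching load $m-1$ there and keeping one machine idle for the final large job. \GREEDY ends with makespan $2m-3$ while $\OPT_{\GREEDY}$ ends with $m-1$, yielding the desired ratio $\frac{2m-3}{m-1} = 2-\frac{1}{m-1}$.

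The main obstacle would be verifying admissibility of the lower-bound construction: $\OPT_{\GREEDY}$ must respect the prefix restriction throughout the middle phase, not only at the end, so its intermediate makespan must never exceed \GREEDY's at any step. This is straightforward here because both algorithms stay at makespan $m-1$ for the entire middle phase. The upper bound, by contrast, is a clean volume argument whose only subtlety is locating where to invoke $\ell \le \ell'$.
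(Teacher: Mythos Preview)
Your proposal is correct and follows exactly the paper's approach: the theorem is simply the conjunction of Lemmas~\ref{lemma-makespan-upper} and~\ref{lemma-makespan-lower}, and your sketches of those lemmas' proofs (the volume argument using $\ell \le \ell'$ for the upper bound, and the three-phase sequence $\langle m-1,\ 1^{(m-1)(m-2)},\ m-1\rangle$ for the lower bound) match the paper's arguments in all essential details.
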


Note that Theorem~\ref{makespanthm} establishes the existence
of an online algorithm, $\GREEDY$, for makespan minimization
on two identical machines with an online-bound\-ed ratio of~$1$.
Next, we generalize this last result to the case of
two uniformly related machines.
Note that for two uniformly related machines we can assume that
machine number~$1$ is strictly faster than machine number~$2$,
and the two speeds are $s>1$ and~$1$.
The load of a job assigned to a machine with speed $s'$ is the size of
the job divided by $s'$, and the load of a machine is the sum of
the loads of the jobs assigned to it.

We define \GREEDY as the algorithm that assigns the current job
to the machine such that adding the job there results in a solution
of a smaller makespan breaking ties in favor of assigning the job
to the slower machine (that is, to machine number~$2$).
If an algorithm breaks ties in favor of assigning the job
to the faster machine (let this algorithm be called $\GREEDY'$),
then its online-bounded ratio is strictly above~$1$,
as the following example implies. The first job has size~$s-1$
(and it is assigned to machine~$1$), and the second job has size~$1$
(and assigning it to any machine will result in the current makespan~$1$).
The first job must be assigned to machine~$1$ by~$\OPT_{\GREEDY'}$,
and it assigns the second job to the second machine.
A third job of size $s+1$ arrives.
This job is assigned to the first machine by~$\OPT_{\GREEDY'}$,
obtaining a makespan of~$2$.
$\GREEDY'$ will have a makespan of at least $\min\{2+1/s,s+1\}>2$ as~$s>1$.

\begin{theorem}\label{makespanrel}
For the problem of minimizing makespan for online job scheduling
on two uniformly related machines,
\GREEDY has online-bounded ratio~$1$.
\end{theorem}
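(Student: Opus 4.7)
The plan is to establish $\GREEDY(I) \leq \OPT_{\GREEDY}(I)$ for every sequence $I$, which combined with $\OPT_{\GREEDY}(I) \leq \GREEDY(I)$ (from the definition) gives equality and hence online bounded ratio~$1$. To show the nontrivial direction, I will consider an arbitrary online-bounded-feasible schedule~$S$ (i.e., one with $S(I') \leq \GREEDY(I')$ on every prefix~$I'$) and prove that $S(I) \geq \GREEDY(I)$. Throughout, states will be written as pairs $(L_1, L_2)$ of completion times on the fast and the slow machine, so that a new job of size~$p$ moves the state either to $(L_1+p/s, L_2)$ or to $(L_1, L_2+p)$.

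First I would perform a local one-step analysis. A short case check shows that the two choices yield the same resulting makespan exactly when $L_1 - L_2 = p(s-1)/s$, and then this common value is $T = L_1 + p/s = L_2 + p$; away from this condition, one of the two machines strictly dominates. Consequently, whenever $S$ begins a step in the same state as $\GREEDY$, the online-bounded constraint forces $S$ to match $\GREEDY$'s choice, so $S$ can only deviate from $\GREEDY$ at a tie step, and the tie-breaking rule then has $\GREEDY$ going to the slow machine while any deviation by $S$ must go to the fast machine.

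The main technical step is to track what happens after such a deviation. Right after a tie deviation, $\GREEDY$ is in state $(T - p/s, T)$ and $S$ is in state $(T, T-p)$, both with makespan $T$; $\GREEDY$ has slack $p/s$ time on the fast machine whereas $S$ has slack $p$ on the slow one. I plan to prove by induction along the subsequent requests the invariant that, as long as each new job $q_i$ satisfies $q_i \leq \delta_{i-1}$ (with $\delta_0 = p$ and $\delta_i = \delta_{i-1} - q_i$), the configurations remain $\GREEDY = (T - \delta_i/s, T)$ and $S = (T, T - \delta_i)$ with common makespan $T$: $\GREEDY$ strictly prefers machine~$1$ and $S$ is pinned to machine~$2$ by the constraint. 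A useful side observation is that no new tie can arise while $G_1 < G_2$, so within the window there is no further deviation opportunity; once $\delta$ reaches $0$ the two schedules resynchronize in state $(T,T)$ and the analysis iterates on the remaining suffix.

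The hardest step, and the reason the tie-breaking rule is correct, is the complementary case of a job $q > \delta_{i-1}$ arriving in an open window: $\GREEDY$ uses machine~$1$ and attains makespan $T + (q - \delta_{i-1})/s$, but $S$'s machine~$1$ option gives $T + q/s$ and its machine~$2$ option gives $T + (q - \delta_{i-1})$, both strictly larger than $\GREEDY$'s new makespan because $s > 1$; hence $S$ cannot satisfy the prefix constraint and does not exist as a feasible schedule. Combining these pieces, any valid $S$ either finishes the sequence at makespan~$T$ equal to $\GREEDY$'s, or rides the window out to $\delta = 0$ and resynchronizes, so $S(I) = \GREEDY(I)$. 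The main obstacle I anticipate is making the recursion genuinely clean: verifying the invariant through every step of the window, ruling out any second tie inside the window, and checking that resynchronization at $\delta = 0$ leaves the remaining suffix in exactly the same setup as the start so that the argument iterates.
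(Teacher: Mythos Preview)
Your proposal is correct and follows essentially the same approach as the paper: both arguments identify that any deviation from \GREEDY must occur at a tie (with \GREEDY going slow and the competitor going fast), track the resulting ``window'' in which \GREEDY fills its fast machine while the competitor fills its slow one, and then show that the first job large enough to raise \GREEDY's makespan forces the competitor's makespan strictly higher, violating the prefix constraint. The only organizational difference is that the paper sets up a proof by contradiction and jumps to the \emph{last} time the two loads are equal (so no resynchronization is ever needed), whereas you argue directly and iterate after each $\delta=0$ resynchronization; the core computation is identical.
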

\begin{proof}
Consider an input, and assume by contradiction that the makespan
of $\GREEDY$ exceeds that of $\OPT_{\GREEDY}$. Consider the last
time that the loads of the two machines of $\GREEDY$ are equal
(this time may be before any jobs arrive or later). Since the
makespan of $\OPT_{\GREEDY}$ at that time cannot be lower than the
makespan of $\GREEDY$, $\OPT_{\GREEDY}$ must have the same
makespan and its machines also have equal loads (it is possible
that the schedules are not identical). After this time, the
machines of $\GREEDY$ never have equal loads. If, starting at this
time and until the input ends, $\GREEDY$ and $\OPT_{\GREEDY}$
select the same machine for every job, then they will have the
same final makespan. Thus, there is a job that they assign to
different machines. Let $j$ be the first such job.  Before $j$ is
assigned, starting the last time that the machines had equal
loads, the two solutions have the same load for each of the two
machines as they received the same jobs.  As $\OPT_{\GREEDY}$
cannot obtain a larger makespan than ${\GREEDY}$, while
${\GREEDY}$ selects a machine that minimizes the makespan, it must
be the case that no matter which machine receives~$j$, the
resulting makespan will be the same. Thus, $\GREEDY$ assigns $j$
to the slower machine because of its tie breaking rule, while
$\OPT_{\GREEDY}$ assigns the job to the faster machine.  Since for
both solutions the loads of both of the two machines were lower
than the makespan achieved after $j$ is assigned, the machine that
achieves the makespan is unique (since there are only two
machines), and for each solution, it is the machine that received
the current job. If the makespan of \GREEDY does not increase in
any future step, then its final makespan cannot exceed that
of~$\OPT_{\GREEDY}$. Thus, assume that there is at least one such
future increase of the makespan of the solution constructed by
\GREEDY and consider the first such future step. Let $j'$ be the
job assigned at that step.

By the definition of $j'$, at the time when $j'$ arrives, the load of
the faster machine is no larger than the load of the slower machine.
Hence, by the definition of \GREEDY, 
the assignment of any job arriving later than
$j$ and up to and including~$j'$ (by \GREEDY) is to the faster
machine. 
Since $\OPT_{\GREEDY}$ cannot assign any
job that arrives after~$j$ but before~$j'$ such that its makespan
increases (since the makespan of \GREEDY does not increase),
$\OPT_{\GREEDY}$ assigns all these jobs to the slower machine. Let
$X_1$ and~$X_2$ be the total sizes of jobs that were assigned to the
two machines (in both solutions) before the arrival of~$j$, $p_j$
and~$p_{j'}$ the sizes of $j$ and~$j'$, and $Z$ the size of jobs
that arrived after~$j$ but before~$j'$. Recall that we have
\begin{equation}
\label{eq1}
\frac{X_1+p_j}s=X_2+p_j
\end{equation}
and this is the value of the makespan
(of both solutions) after $j$ was assigned.
Let $I'$ be the prefix of the input sequence ending with $j'$.
We find that the makespan of $\GREEDY$ after $j'$ is assigned is
$$\GREEDY(I') = \frac{X_1+Z+p_{j'}}s\,.$$
Since $\frac{X_1+Z}s < X_2+p_j=\frac{X_1+p_j}s$ while
$\frac{X_1+Z+p_{j'}}s >X_2+p_j=\frac{X_1+p_j}s$, we have
\begin{equation}
\label{eq2}
Z<p_j<Z+p_{j'}.
\end{equation}
By (\ref{eq2}) and $s>1$,
$X_1+Z+p_{j'} = X_1 + (1-s)(Z+p_{j'}) + s(Z+p_{j'}) < X_1+(1-s)p_j+s(Z+p_{j'})$.
Thus, $$\GREEDY(I') < \frac{X_1+p_j}{s} - p_j + Z + p_{j'}\,.$$
The makespan of $\OPT_{\GREEDY}$ after $j'$ is assigned is
$$\OPT_{\GREEDY}(I') \geq
      \min\left\{X_2+Z+p_{j'},\frac{X_1+p_j+p_{j'}}s\right\}\,.$$
By~(\ref{eq1}), $X_2+Z+p_{j'}=\frac{X_1+p_j}s-p_j+Z+p_{j'}$, and
by~(\ref{eq2}), $\frac{X_1+p_j+p_{j'}}{s} > \frac{X_1+Z+p_{j'}}s$.
Thus, when $j'$ arrives, $\OPT_{\GREEDY}$ cannot assign it without
increasing its makespan beyond the makespan of \GREEDY, contradicting
the definition of an online-bounded optimal solution.
\mbox{}\qed
\end{proof}

We now consider the algorithm \FAST that simply schedules all jobs on
the faster machine.
In contrast to \GREEDY, \FAST does not have an online-bounded ratio of~$1$.
This also contrasts with competitive analysis, since \FAST has an
optimal competitive ratio for~$s \geq \phi$,
where $\phi =\frac{1+\sqrt{5}}{2} \approx 1.618$.

\begin{theorem}\label{fastbound}
For two related machines with speed ratio~$s$, \FAST has an
online-bounded ratio of~$\frac{s+1}{s}$.
\end{theorem}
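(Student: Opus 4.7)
The plan is to prove matching upper and lower bounds on the online bounded ratio of \FAST.

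For the upper bound I would use a volumetric argument. On any input $I$ whose jobs have total size $W$, the algorithm \FAST assigns every job to the fast machine, so $\FAST(I)=W/s$. Since the two machines together process work at rate $s+1$, any schedule of $I$ has makespan at least $W/(s+1)$; in particular $\OPT_{\FAST}(I)\geq W/(s+1)$. Dividing gives $\FAST(I)\leq \frac{s+1}{s}\,\OPT_{\FAST}(I)$ uniformly in $I$.

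For the lower bound I would exhibit, for each positive integer $n$, an input $I_n$ consisting of $n$ jobs each of size $1/n$. Then $\FAST(I_n)=1/s$ for every $n$. To upper-bound $\OPT_{\FAST}(I_n)$ I would describe an explicit feasible schedule: assign the $k$-th job to the slow machine whenever $\lfloor k/(s+1)\rfloor > \lfloor (k-1)/(s+1)\rfloor$, and to the fast machine otherwise. After $k$ arrivals the slow load is $\lfloor k/(s+1)\rfloor/n$ and the fast load is $\bigl(k-\lfloor k/(s+1)\rfloor\bigr)/(ns)$; both are bounded by $k/(ns)$, which is precisely \FAST's makespan on the prefix, so the prefix constraint defining $\OPT_{\FAST}$ is respected at every step. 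The final makespan of this schedule is at most $\frac{1}{s+1}+\frac{1}{ns}$, hence $\FAST(I_n)/\OPT_{\FAST}(I_n)\geq (1/s)/\bigl(1/(s+1)+1/(ns)\bigr)$, which tends to $(s+1)/s$ as $n\to\infty$. Combining the two directions yields the stated ratio.

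There is no substantial obstacle. The upper bound is essentially a one-line volume inequality. The lower bound only requires checking the trivial inequality $\lfloor k/(s+1)\rfloor\leq k/s$ to confirm that the rounded integer schedule obeys the prefix constraint, and then tracking the vanishing $1/(ns)$ slack in the limit, both of which are routine.
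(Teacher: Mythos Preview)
Your proposal is correct. The upper bound is exactly the paper's volume argument.

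For the lower bound, however, the paper uses a far simpler witness: the two-job sequence $\langle s^2,\, s\rangle$. After the first job \FAST has makespan $s$, so the prefix constraint forces $\OPT_{\FAST}$ to place that job on the fast machine as well; $\OPT_{\FAST}$ then puts the second job on the slow machine and finishes with makespan $s$, while \FAST ends at $s+1$. This attains the ratio $\frac{s+1}{s}$ exactly on a single finite input, with no limiting argument and no floor-function bookkeeping. Your construction with $n$ equal tiny jobs and a balanced round-robin schedule is valid, but it only approaches the ratio as $n\to\infty$ and requires checking the prefix constraint at every step. What your approach does make explicit is \emph{why} the volume bound $W/(s+1)$ is tight here---you are literally load-balancing to realise it under the prefix constraint---but for a statement this elementary the paper's two-job instance is the cleaner route.
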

\begin{proof}
For the upper bound, consider any input sequence~$I$ and let $P$
denote the total size of the jobs in~$I$.
Then, $$\FAST(I)=\frac{P}{s} \text{ and } \OPT(I) \geq
\frac{P}{1+s}\,,$$
yielding a ratio of at most~$\frac{s+1}{s}$.

For the lower bound, consider the sequence $\langle s^2,s \rangle$.
Both \OPT and \FAST schedule the first job on the faster machine.
However, for the second job, \OPT will use the slower machine,
obtaining a makespan of~$s$.
Placing both jobs on the faster machine, \FAST ends up with a makespan
of~$s+1$.
\mbox{}\qed
\end{proof}

By Theorem~\ref{makespanthm},
the result of Theorem~\ref{makespanrel} cannot be extended to three
or more identical machines for $\GREEDY$.
We conclude this section by proving that such a generalization is
impossible, not only for \GREEDY, but for any deterministic online
algorithm.

\begin{theorem}\label{makespanlb}
Let $m\geq 3$. For the problem of minimizing makespan for
online job sched\-uling on $m$ identical machines,
any deterministic online algorithm~$A$ has online-bound\-ed ratio
of at least~$\frac{4}{3}$.
\end{theorem}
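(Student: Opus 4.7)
The plan is to build, for each deterministic online algorithm $A$ and each $m \geq 3$, a request sequence $I_A$ witnessing $A(I_A) \geq \frac{4}{3}\OPT_A(I_A)$. Since $A$ is deterministic, the adversary may observe $A$'s placements and construct $I_A$ adaptively. I would focus the main work on $m = 3$ and then argue that extra machines can be absorbed by sending additional large padding jobs that $A$ and $\OPT_A$ are forced to handle identically, reducing the problem back to the three-machine case.

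For $m=3$, I would send a short initial sequence of jobs whose sizes leave little room for $A$: typically one small job, one job of medium size, and a job that fills the second non-empty machine. The point of this preamble is to channel $A$ into one of only a handful of load profiles (up to machine relabeling), and at the same time set things up so that $\OPT_A$ (knowing the whole sequence in advance) has a strictly different, more ``spread out'' placement of the same jobs that still satisfies the prefix-dominance constraint. If during this phase $A$ ever creates an imbalance bad enough to already give ratio $\geq \frac{4}{3}$ on its own, the adversary stops the sequence there.

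Otherwise, the crucial step is a final job whose size~$w$ is chosen based on $A$'s current state. The size is picked so that every machine in $A$'s current configuration has enough load that placing~$w$ on it forces $A$'s makespan up to $4$, while at the same time $\OPT_A$'s alternative (prefix-legal) configuration has a single lightly-loaded machine capable of absorbing~$w$ while keeping the makespan at~$3$. To conclude I would verify by a short case analysis, organized by the symmetry-classes of $A$'s load vector after the preamble, that in each case such a witness~$w$ exists and that $\OPT_A$'s alternative schedule respects $A$'s prefix makespans.

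The main obstacle is handling every deterministic tie-breaking rule of~$A$ simultaneously. Because different tie-breakings steer $A$ into genuinely different end-states after the preamble, the adversary must have a tailored completion for each case, and one must show that in every case the alternative schedule used to bound $\OPT_A$ from above satisfies the prefix-dominance condition on \emph{all} prefixes (not merely the final one). I expect the case split to be small, but establishing the prefix legality of $\OPT_A$'s schedule --- in particular preventing any intermediate prefix from exceeding $A$'s corresponding makespan --- will require the most careful bookkeeping.
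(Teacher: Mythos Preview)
Your outline is in the right spirit --- an adaptive adversary, a short preamble, and a case split on $A$'s resulting configuration is exactly how the paper proceeds --- but there is a concrete gap in the plan. You commit to finishing with \emph{a single} final job~$w$ chosen so that ``every machine in $A$'s current configuration has enough load that placing~$w$ on it forces $A$'s makespan up to~$4$''. This cannot work in the branch where $A$ still has an empty machine after the preamble: to force makespan~$4$ on an empty machine you would need $w\geq 4$, and then $\OPT_A$ also has makespan at least~$4$ regardless of its configuration, so the ratio collapses to~$1$. Concretely, with the natural three-machine preamble (one job of size~$3$, two jobs of size~$1$), if $A$ stacks the two small jobs you get the load profile $(3,2,0)$; no single final job of size at most~$3$ hurts $A$, and no job of size at least~$4$ helps the adversary. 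The paper handles exactly this branch by sending \emph{two} further jobs of size~$2$: $A$ is driven to makespan~$4$, while $\OPT_A$ (which spread the small jobs as $(3,1,1)$) completes to $(3,3,3)$.

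On the reduction from general~$m$ to $m=3$: your padding idea is essentially what the paper does, but it is simpler to fold it into the construction directly. The paper sends $m-2$ jobs of size~$3$ first (not $m-3$); any stacking by~$A$ already gives ratio $\geq 2$, and otherwise both $A$ and $\OPT_A$ are forced to spread these jobs onto $m-2$ distinct machines, leaving exactly two machines free on each side. After that come the two size-$1$ jobs and the case split above (one job of size~$3$ if $A$ separated the $1$'s, two jobs of size~$2$ if $A$ stacked them). The prefix-dominance check is then immediate, since $A$'s makespan is already~$3$ after the very first job and $\OPT_A$ never exceeds~$3$ at any step. So the fix to your plan is small --- allow two final jobs in one branch --- but without it the argument does not close.
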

\begin{proof}
The input sequence starts with $m-2$ jobs of size~$3$ followed by two jobs
of size~$1$.  At this point, the makespan of the solution created by
$A$ is either~$3$ and in this case we continue, or at least~$4$
and in this case we stop the sequence.

If we decide to continue and the two jobs of size~$1$
are assigned to a common machine (thus $A$ has $m-2$ machines each
with load of~$3$ and another machine with load~$2$), the
sequence is augmented by two jobs, each of size
$2$, and thus the resulting makespan of~$A$ is at least
$4$.  Otherwise, if we decide to continue and the two
jobs of size~$1$ are assigned to different machines (and
thus the load of every machine in~$A$ is at least~$1$),
then the sequence is augmented by one job of size~$3$ and so the
resulting makespan of~$A$ is at least~$4$.  Note that in
all cases the makespan of~$\OPT_A$ is~$3$.
This holds since after the processing of the first job,
an optimal algorithm is allowed to have a makespan of~$3$
and for each possible case,
there exists a solution with makespan~$3$ for the entire sequence.
The claim follows because in all cases the makespan of~$A$ is at least~$4$.
\mbox{}\qed
\end{proof}

An obvious next step would be to try to match the general lower bound
 of~$\frac43$ by designing an algorithm that places each job on the
 most loaded machine where the bound of~$\frac43$ would not be
 violated.
However, even for $m=3$, this would not work, as seen by the input
 sequence $I=\langle \frac34, \frac14, \frac{5}{12}, \frac16,
 \frac{7}{12}, \frac56 \rangle$.
The algorithm would combine the first two jobs on one machine and the
 following two on another machine.
Since the optimal makespan at this point is~$\frac34$, the algorithm
will schedule the fifth job on the third machine.
When the last job arrives, all machines have a load of at least~$\frac{7}{12}$,
resulting in a makespan of $\frac{17}{12} > 1.4$.
Note that $I$ can be scheduled such that each machine has a load of
 exactly 1.
Since the algorithm has a makespan of 1 already after the second job,
 the online-bounded restriction is actually no restriction on \OPT for
 this sequence.

\section{Machine Scheduling: Santa Claus}
\label{sec:santa}
In contrast to makespan, the objective in Santa Claus scheduling
is to maximize the minimum load.
The problem is also known as machine covering.
Traditionally, the algorithm \GREEDY for this problem assigns any
new job to a machine having a minimum load in the schedule that
was created up to the time just before the job is added to the
solution (breaking ties arbitrarily). For identical machines, this
algorithm is equivalent to the greedy algorithm for makespan
minimization. Unlike the makespan minimization problem, where this
algorithm has online-bounded ratio of~$1$ only for two identical
machines, here we show that \GREEDY has an online-bounded ratio of
$1$ for any number of identical machines.

\begin{theorem}\label{idm}
For the Santa Claus problem on $m$ identical machines,
\GREEDY has online-bounded ratio~$1$.
\end{theorem}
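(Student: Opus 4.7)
The plan is to prove both directions of $\OPT_{\GREEDY}(I) = \GREEDY(I)$ for every input $I$. The inequality $\OPT_{\GREEDY}(I) \geq \GREEDY(I)$ is immediate, since $\GREEDY$ itself trivially satisfies the prefix condition $\OPT_{\GREEDY}(I') \geq \GREEDY(I')$ on every prefix $I'$ and therefore qualifies as a candidate for $\OPT_{\GREEDY}$. The substantive direction is $\OPT_{\GREEDY}(I) \leq \GREEDY(I)$, and I would obtain it by showing that every offline algorithm $A$ satisfying the prefix condition actually yields $A(I) = \GREEDY(I)$.

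To this end, fix such a valid $A$ and let $k$ be the first step at which $A$'s load multiset differs from $\GREEDY$'s (if no such step exists, $A$ and $\GREEDY$ produce identical configurations and there is nothing to show). Let $\ell_1 \leq \cdots \leq \ell_m$ be their common sorted loads just before step $k$. Since $\GREEDY$ places the new job $p_k$ on a load-$\ell_1$ machine, a one-line multiset calculation shows that $A$ preserves the multiset equality if and only if it also places on a machine of load $\ell_1$; deviation means $A$ places on some machine of load $\ell_i > \ell_1$. When $\ell_1$ is unique in the configuration ($\ell_2 > \ell_1$), $A$'s minimum remains $\ell_1$, while $\GREEDY$'s jumps to $\min(\ell_1 + p_k, \ell_2) > \ell_1$, violating the prefix condition at step $k$ and contradicting the validity of $A$. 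Hence any deviation by a valid $A$ must occur in a \emph{multi-minimum} configuration, with $c \geq 2$ machines sharing the minimum load $\ell_1$.

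The main obstacle is the multi-minimum case, where the deviation does not immediately break the constraint. Here the key is to track the number of machines at load exactly $\ell_1$. Immediately after step $k$, $\GREEDY$ has $c - 1$ such machines and $A$ still has $c$. From step $k + 1$ onward, $\GREEDY$'s count decreases by exactly one per step, because $\GREEDY$ always places on a minimum-load machine and each such placement raises that machine's load above $\ell_1$. In contrast, $A$'s count can decrease by at most one per step since $A$ places only one job per step, and $A$'s loads remain at least $\ell_1$ throughout. Therefore, at step $T := k + c - 1$, $\GREEDY$'s count has just reached zero and its minimum strictly exceeds $\ell_1$, while $A$ still retains at least one load-$\ell_1$ machine and hence has minimum exactly $\ell_1$, violating the prefix condition at step $T$ whenever $T \leq n$ and contradicting validity. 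In the remaining case $T > n$, the value of $c$ is large enough that $\GREEDY$ still has a load-$\ell_1$ machine at the end of the input, so $\GREEDY$'s final minimum equals $\ell_1$; the same counting bound leaves $A$ with at least one load-$\ell_1$ machine at the end, so $A$'s final minimum also equals $\ell_1 = \GREEDY(I)$. In every surviving sub-case, $A(I) = \GREEDY(I)$, yielding $\OPT_{\GREEDY}(I) \leq \GREEDY(I)$ and completing the proof.
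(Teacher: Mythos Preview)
Your proposal is correct and takes essentially the same approach as the paper's proof: both arguments consider the first step at which the load multisets diverge, dispose of the unique-minimum case by an immediate prefix-condition violation, and in the multi-minimum case track the number of machines remaining at the old minimum $\ell_1$ to conclude that either the input ends while $\GREEDY$ still has minimum $\ell_1$ (so both finish at $\ell_1$) or $A$ is forced below $\GREEDY$ on some prefix. Your write-up is somewhat more explicit in separating the two sub-cases $T\le n$ and $T>n$, but the underlying counting argument is identical to the paper's.
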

\begin{proof}
Let a configuration be a multi-set of the current loads
on all of the machines, i.e., without any annotation of which machine is which.
As long as $\OPT_{\GREEDY}$ also assigns each job to a machine
with minimum load, the configurations of \GREEDY and $\OPT_{\GREEDY}$
are identical.

Consider the first time $\OPT_{\GREEDY}$ assigns a job $j$ to a
non-minimal load machine. If, when that job~$j$ arrives, there is
a unique machine with minimum load, $\OPT_{\GREEDY}$ would have a
worse objective value than \GREEDY after placing~$j$, so, by
definition of online-bounded analysis, this cannot happen. Now
consider the situation where $k\geq 2$ machines have minimum load.
Then, after processing~$j$, \GREEDY has $k-1$ machines with
minimum load, whereas $\OPT_{\GREEDY}$ has~$k$. In that case, no
more than $k-2$ further jobs can be given. This is seen as
follows: If $k-1$ jobs were given, \GREEDY would place one on each
of its $k-1$ machines with minimum load, and, thus, raise the
minimum. $\OPT_{\GREEDY}$, on the other hand, would not be able to
raise (at this step) the minimum of all of its $k$ machines with
minimum load, and would therefore not be at least as good as
\GREEDY; a contradiction.

Thus, $\OPT_{\GREEDY}$ can only have a different
configuration than \GREEDY after \GREEDY (and $\OPT_{\GREEDY}$) have obtained
their final (and identical) objective value,
and so, the online-bounded ratio of \GREEDY is~$1$.
\mbox{}\qed
\end{proof}

Next, we show that unlike the makespan minimization problem, for
which there is an online algorithm with online-bounded ratio of~$1$
for the case of two uniformly related machines
(Theorem~\ref{makespanrel}), such a result is impossible for
the Santa Claus problem.

\begin{theorem}
\label{SantaClausImpossibility}
For the Santa Claus problem on two uniformly related machines
with speed ratio~$s$,
no deterministic online algorithm has an online-bounded ratio
larger than~$\frac1s$.
\end{theorem}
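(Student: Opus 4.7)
The plan is to exhibit, for each deterministic online algorithm~$A$, an adaptive two-job adversarial sequence~$I$ for which $A(I)\le \frac{1}{s}\,\OPT_A(I)$. Since $A$ is deterministic, the adversary may choose the second request based on $A$'s placement of the first.

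I would let $j_1$ have size~$s$, and then branch on $A$'s first decision: if $A$ assigns $j_1$ to the slow machine, the adversary presents $j_2$ of size~$1$; if $A$ assigns $j_1$ to the fast machine, the adversary presents $j_2$ of size~$s^2$. This gives four scenarios to verify, one for each combination of $A$'s two binary choices.

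The core observation is that in each branch, the ``balanced'' offline schedule---which places the larger of the two jobs on the fast machine and the smaller on the slow machine---yields equal loads on the two machines and a strictly positive minimum load: loads $(1,1)$ in the first branch (jobs of sizes $s$ and~$1$) and loads $(s,s)$ in the second (jobs of sizes $s$ and $s^2$). I would then check that $\OPT_A$ is permitted to realize this balanced schedule, which reduces to verifying the two prefix constraints. The constraint after~$j_1$ is automatic, since both $A$'s placement and the balanced schedule leave one machine idle, so both have min load~$0$. The full-sequence constraint is also satisfied, since in every subcase the balanced schedule's value dominates $A$'s final value.

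The main obstacle is really just keeping the case analysis tidy. In the two scenarios where $A$ places both jobs on the same machine, $A(I)=0$ and the ratio is~$0$, hence trivially at most~$\frac{1}{s}$. In the other two scenarios, $A$ splits the jobs between the machines, achieving final min load~$\frac{1}{s}$ when the size-$s$ job landed on the slow machine, and min load~$1$ when it landed on the fast machine. Matched against $\OPT_A(I)=1$ and $\OPT_A(I)=s$ respectively, each case yields a ratio of exactly~$\frac{1}{s}$, which establishes the claimed upper bound on the online bounded ratio.
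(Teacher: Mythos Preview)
Your proposal is correct and takes essentially the same approach as the paper: a two-job adaptive adversary where the second job's size is chosen so that the ``balanced'' offline assignment equalizes the loads, while $A$'s forced first placement leaves it a factor~$s$ behind. Your instance is in fact the paper's instance scaled by~$s$ (the paper starts with a job of size~$1$ and follows with either size~$s$ or size~$1/s$), and your explicit verification of the prefix constraint for $\OPT_A$ mirrors the paper's observation that with only two jobs the objective is zero after the first, so $\OPT_A$ coincides with the unrestricted optimum.
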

\begin{proof}
For the setting of two uniformly related machines with speeds~$1$ and~$s$,
consider any online algorithm~$A$.
The input consists of exactly two jobs.  After
the first job is assigned by~$A$, the objective function value
remains zero, and only if the algorithm assigns the two jobs to
distinct machines, will it have a positive objective function
value. Thus, when there are only two jobs, $\OPT_A$ is simply the optimal
solution for the instance. The first job is of size~$1$. If $A$
assigns the job to the machine of speed~$s$, then the next job is
of size~$s$.  At this point $\OPT_A$ has value~$1$ (by assigning
the first job to the slower machine and the second to the faster
machine), but $A$ has either zero value (if both jobs are assigned
to the faster machine) or a value of~$\frac{1}{s}$.
In the second case where $A$ assigns the first job (of
size~$1$) to the slower machine of speed~$1$, the second job has
size~$\frac{1}{s}$.  At this point $\OPT_A$ has value
$\frac{1}{s}$ (by assigning the first job to the faster machine
and the second to the slower machine), but $A$ has either zero
value (if both jobs are assigned to the slower machine) or a value
of~$\frac{1/s}{s}$.
\mbox{}\qed
\end{proof}

Interestingly, the online-bounded ratio of \GREEDY matches this
bound, whereas post-\GREEDY does not. The algorithm post-\GREEDY
is relevant for uniformly related machines, and places a job on
the machine where its resulting completion time will be minimum,
which is not necessarily the machine with the smallest load when
the job arrives. \GREEDY also achieves the best possible
competitive ratio~$\frac 1{s+1}$~\cite{Eps05}.

\begin{theorem}\label{oneovers}
For the Santa Claus problem on two uniformly related machines
with speed ratio~$s$, the online-bounded ratio of~$\GREEDY$ is~$\frac1s$.
\end{theorem}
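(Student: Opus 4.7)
The plan is to combine the preceding impossibility theorem, which already yields the upper bound $\tfrac{1}{s}$ on the online bounded ratio of any deterministic online algorithm, with a matching lower bound $G(I) \ge \OPT_G(I)/s$ that I will establish directly for every input sequence~$I$.

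Fix~$I$ and let $(L_1,L_2)$ be $G$'s final loads on the fast and slow machine, with $g = \min(L_1,L_2) = G(I)$. I would split into Case~I ($L_1 \le L_2$, so $g = L_1$) and Case~II ($L_1 > L_2$, so $g = L_2$). In each case I single out a \emph{pivotal} job~$j$: in Case~I, the last job $G$ placed on the slow machine; in Case~II, the last job $G$ placed on the fast machine. By the choice of~$j$, every job arriving after~$j$ is placed by~$G$ on the opposite side, so $G$'s final load on that side equals its load there just before~$j$ plus the total post-$j$ processing time (scaled by $1/s$ on the fast side). Within each case I then sub-case on whether $\OPT_G$ places~$j$ on the same machine as~$G$ or on the other one.

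Each of the four resulting sub-cases rests on two ingredients: the prefix constraint that $\OPT_G$'s minimum load at the prefix just before~$j$ is at least $G$'s minimum load there, and the conservation identity $s f^{*,\mathrm{pre}} + \ell^{*,\mathrm{pre}} = s L_1^{\mathrm{pre}} + L_2^{\mathrm{pre}}$. Three of the four sub-cases (both sub-cases of Case~II and the ``same machine'' sub-case of Case~I) yield the stronger bound $\OPT_G \le g$: the prefix constraint combined with conservation forces one of $f^{*,\mathrm{pre}},\ell^{*,\mathrm{pre}}$ to be at most $L_2^{\mathrm{pre}}$ (or $L_1^{\mathrm{pre}}$ in the appropriate variant), which, followed through the post-$j$ jobs, gives the corresponding $f^*_{\mathrm{end}}$ or $\ell^*_{\mathrm{end}}$ at most $g$. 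The remaining sub-case---the ``opposite machine'' case of Case~I, in which $\OPT_G$ places on the fast machine the job that $G$ placed on slow---is the only one where the factor~$s$ actually appears: the pre-$j$ prefix bound forces only $\ell^{*,\mathrm{pre}} \le s L_1^{\mathrm{pre}}$, after which $\OPT_G$ may shunt the entire post-$j$ processing time of $s(L_1 - L_1^{\mathrm{pre}})$ onto its slow machine unscaled, yielding $\ell^*_{\mathrm{end}} \le s L_1 = sg$.

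The main obstacle is precisely the asymmetry between the two machine speeds: Cases~I and~II are not exact reflections of each other, and each of the four sub-cases needs its own combination of the prefix constraint, the conservation identity, and the definition of~$j$ to yield the desired upper bound on $\OPT_G$. The worst-case factor $1/s$ arises from exactly one of them, which also pinpoints the unique scenario under which the upper bound of the preceding theorem is tight for~$G$.
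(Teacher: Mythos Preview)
Your proposal is correct and follows essentially the same approach as the paper: both arguments single out the job~$j$ of maximum completion time in $G$'s schedule (which coincides with your ``last job on the more-loaded machine''), apply the prefix constraint $\min\{t_1,t_2\}\ge x$ together with the total-size conservation identity at the moment just before~$j$, and bound the completion time in $\OPT_G$ of the machine that does not receive~$j$. The only cosmetic difference is that you split explicitly into four sub-cases according to where $\OPT_G$ places~$j$, whereas the paper absorbs that split into a single $\max$ within each of its two cases; the resulting inequalities and the identification of the unique sub-case producing the factor~$s$ are the same.
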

\begin{proof}
By Theorem~\ref{SantaClausImpossibility}, the online-bounded ratio
of~$\GREEDY$ is at most~$\frac{1}{s}$. Now we show that it is at
least~$\frac{1}{s}$. Assume $s>1$ (otherwise the result follows
from Theorem~\ref{idm}). For a given input, $I$, and the output
of~$\GREEDY$ for this input, let $j$ denote a job of maximum
completion time. Let $x$ denote the load of the machine with
job~$j$ just before $j$ is assigned. Let $y$ denote the load of
the other machine at the same time. By the definition
of~$\GREEDY$, $y \geq x$. Let $y+z$ denote the final load of the
machine whose previous load was $y$ (the machine that does not
receive~$j$). The value $y+z$ is also the value of~$\GREEDY$ on
this input.

Consider a solution by $\OPT_{\GREEDY}$ (that is, an online-bounded
optimal solution). Let $t_1$ and~$t_2$ denote the loads of the
machines of speeds~$1$ and~$s$, respectively, before~$j$ is
assigned. By the definition of such an optimal solution,
$\min\{t_1,t_2\} \geq x$. We split the analysis into two cases,
based on which machine receives~$j$ in the output of~$\GREEDY$.

Assume that the machine of speed~$1$ runs~$j$ in the schedule
of~$\GREEDY$. Just before $j$ arrives, the total size of jobs
is~$x+sy$. We find $t_1 = x+sy-st_2 \leq  x+sy-sx<sy$ (since
$s>1$) and $t_2 =\frac{x+sy-t_1}{s} \leq y$. The total size of
jobs arriving strictly after $j$ is~$sz$, and in the optimal
solution, the load of the machine that does not receive~$j$ is at
most $\max\{t_1+sz,t_2+z\} \leq \max\{sy+sz,y+z\}=s(y+z)$. Thus,
$\OPT_{\GREEDY}(I) \leq s(y+z) \leq s \cdot \GREEDY(I)$.

Next, assume that the machine of speed~$s$ runs~$j$ in the
schedule of~$\GREEDY$. Just before $j$ arrives, the total size of
jobs is~$sx+y$. We find $t_1 = sx+y-st_2 \leq  y$ and $t_2
=\frac{sx+y-t_1}{s} \leq \frac{sx+y-x}{s}\leq \frac{(s-1)y+y}{s} =
y$ (by~$x \leq y$). The total size of jobs arriving strictly
after~$j$ is~$z$, and in the optimal solution, the load of the
machine that does not receive~$j$ is at most
$\max\{t_1+z,t_2+\frac zs\} \leq \max\{y+z,y+\frac zs\}=y+z$, and
in this case the solution of~$\GREEDY$ is optimal. \mbox{}\qed
\end{proof}

\section{Classic Bin Packing and Bin Covering}
\label{sec:bin}
In classic bin packing, the input is a sequence of items of sizes
 $s$, $0 < s \leq 1$, that should be packed in as few bins of size 1
 as possible. We say that a bin is open if at least one item has been
placed in the bin.
An Any-Fit algorithm is an algorithm that never opens a new bin
if the current item fits in a bin that is already open.

\begin{theorem}\label{atleast32}
Any Any-Fit algorithm has an online-bounded ratio of at least~$\frac32$.
\end{theorem}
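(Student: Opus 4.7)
The plan is an adaptive adversarial construction. For any given Any-Fit algorithm $A$, I would exhibit a five-item gadget whose items depend on $A$'s online choices, forcing $A$ to open three bins while $\OPT_A$ achieves only two, so that $A(I)/\OPT_A(I) = \frac{3}{2}$ on a single gadget instance. To promote this to the asymptotic bound, I would then concatenate many such gadgets.

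For the base gadget, I would first present $\frac{2}{3}$ and then $\frac{1}{2}$: since $\frac{2}{3} + \frac{1}{2} > 1$, any Any-Fit algorithm must place these in two distinct bins $b_1$ and $b_2$. I would then present $\frac{1}{4}$, which fits in both $b_1$ (free space $\frac{1}{3}$) and $b_2$ (free space $\frac{1}{2}$); since $A$ is Any-Fit it must place the item in one of these bins, and the adversary branches on the choice. If $A$ places $\frac{1}{4}$ into $b_1$ (making $b_1=\frac{11}{12}$), I continue with $\frac{1}{3}$ and then $\frac{1}{4}$: the $\frac{1}{3}$ is forced into $b_2$ (which becomes $\frac{5}{6}$), after which the trailing $\frac{1}{4}$ fits in neither existing bin (free $\frac{1}{12}$ and $\frac{1}{6}$) and $A$ must open a third bin. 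If instead $A$ places $\frac{1}{4}$ into $b_2$ (making $b_2=\frac{3}{4}$), I continue with a single $\frac{1}{2}$, which fits in neither $b_1$ nor $b_2$ (free $\frac{1}{3}$ and $\frac{1}{4}$), again forcing a third bin. In either branch $\OPT_A$ attains two bins by making the opposite choice at the third step and finishing with packings $(\frac{2}{3}+\frac{1}{3},\,\frac{1}{2}+\frac{1}{4}+\frac{1}{4})=(1,1)$ or $(\frac{2}{3}+\frac{1}{4},\,\frac{1}{2}+\frac{1}{2})=(\frac{11}{12},1)$; a prefix-by-prefix check confirms $\OPT_A(I') \leq A(I')$ at every step so this packing is admissible.

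The main obstacle is the asymptotic amplification. A naive concatenation leaves $A$'s third bin only partially filled, so the next gadget's first item can fit in that remaining free space and the per-gadget contribution to $A$ drops, eroding the ratio toward $1$. I would address this by inserting between consecutive gadgets a separator item of size just exceeding the largest remaining free space in $A$'s bins, so that $A$ is forced to open a fresh bin at the start of each new gadget and the gadgets effectively operate independently; a careful amortization would then be needed to confirm that each gadget-plus-separator block still contributes asymptotically $3$ bins to $A$ and $2$ to $\OPT_A$. Verifying this amortization robustly for every possible tie-breaking rule of $A$---so that the separator sizes and the branching remain consistent no matter which fitting bin $A$ selects at each step---is the principal technical hurdle.
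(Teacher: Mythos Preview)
Your single gadget is correct: it forces any Any-Fit algorithm to open three bins while the online-bounded optimum uses two, and the prefix condition is satisfied throughout. So the \emph{strict} ratio $\frac32$ follows. The difficulty, as you yourself flag, is the amplification to the \emph{asymptotic} ratio, and here the separator idea does not work as stated.

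After your gadget, $A$'s third bin is severely underfilled: in branch~1 it contains only the final $\frac14$ (free space $\frac34$), in branch~2 only the final $\frac12$ (free space $\frac12$). A single separator ``just exceeding the largest remaining free space'' therefore has size $>\frac34$ (resp.\ $>\frac12$). In branch~1, $\OPT_A$'s two bins are completely full, so the separator forces a third $\OPT_A$-bin as well; you are now at $4$ versus $3$, but $A$'s old bin with free space $\frac34$ is still sitting there, and the next gadget's first item $\frac23$ will be absorbed into it by the Any-Fit rule, wrecking the replay. If instead you choose a separator of size exactly $\frac34$ so that $A$ absorbs it into the underfilled bin, then $\OPT_A$ must open a new bin for it and you are back to $3$ versus $3$: the one-bin lead that was the whole point of the gadget is erased, and in the next block $\OPT_A$ can no longer afford to open its ``extra'' bin. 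Either way the amortization collapses. The branch-dependent end states ($\{\frac{11}{12},\frac56,\frac14\}$ versus $\{\frac23,\frac34,\frac12\}$) make a uniform continuation even harder.

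The paper sidesteps this entirely. Its initial adaptive gadget is engineered so that \emph{every} bin of $A$ ends up filled to at least $\frac12+n\eps$ (not merely that $A$ has one more bin). Once that invariant holds, the rest of the sequence is non-adaptive: $n-1$ blocks of the form $\langle \frac12-i\eps,\frac12-i\eps,\frac12+i\eps,\frac12+i\eps\rangle$ with $i$ decreasing. No item of block~$i$ fits in any earlier bin (all earlier free space is at most $\frac12-(i+1)\eps$), the Any-Fit rule forces the two $\frac12-i\eps$ items into the \emph{same} fresh bin, and then the two $\frac12+i\eps$ items each need their own bin: three bins for $A$, while $\OPT_A$ spends its accumulated lead to put the two small items in separate bins and pair each with a large one, using two. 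The missing idea in your approach is precisely this ``all bins more than half full'' invariant together with the decreasing-perturbation trick; without it, there is no clean way to make the blocks independent.
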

\begin{proof}
The adversary sequence,~$I$, consists of three parts, $I_1$, $I_2$,
 and~$I_3$.
$I_1$ and~$I_2$ contain just a few items each, while $I_3$
 contains $4(n-1)$ items, for some large integer~$n$.
We show that any Any-Fit algorithm, \ALG,
uses $3(n-1)$ bins for $I_3$, whereas \OPTALG
uses only $2(n-1)$ bins.

The first part of~$I$ consists of three items:
 $$I_1 = \left\langle \frac{2}{3}, \frac{5}{12}, \frac{1}{4}
\right\rangle$$
Any algorithm will have to use two bins for the first two items, and
 any Any-Fit algorithm will pack the third item in one of these two
 bins.
The second part of the sequence depends on whether \ALG packs the third item
 in the first or the second bin.

If \ALG packs the item of size $\frac{1}{4}$ in the first bin together
 with the item of size $\frac{2}{3}$, the second part of the sequence
 contains four items:
 $$I_2 = \left\langle \frac{1}{3}, \frac{1}{3}, \frac{1}{2}-n\eps,
         \frac{1}{2}+n\eps \right\rangle,$$
 for some small \eps, $0<\eps<\frac{1}{12n}$.
Since \ALG is an Any-Fit algorithm, it packs the first of these four items
 in the second bin and then opens a third bin for the next two items
 and a fourth bin for the last item.
\OPTALG uses only three bins in total for $I_1$ and~$I_2$, combining the
 items of sizes $\frac{2}{3}$ and~$\frac{1}{3}$ in the first bin and
 the items of sizes $\frac{1}{2}-n\eps$ and~$\frac{1}{2}+n\eps$ in the
 third bin.

If \ALG packs the item of size $\frac{1}{4}$ in the second bin
 together with the item of size~$\frac{5}{12}$, the second part of
 the sequence contains only one item:
 $$I_2 = \left\langle \frac{7}{12} \right\rangle$$
\ALG will have to open a new bin for this item.
\OPTALG, on the other hand, will combine the items of sizes~$\frac{2}{3}$
 and~$\frac{1}{4}$ in one bin and the items of size~$\frac{5}{12}$
 and~$\frac{7}{12}$ in another bin.

In both cases, \ALG has now opened one more bin than \OPTALG and each of
 \ALG's bins is filled to at least~$\frac{1}{2}+n\eps$.

The last part of the sequence consists of $n-1$ consecutive subsequences:
 $$\left\langle \frac{1}{2}-i\eps, \frac{1}{2}-i\eps,
\frac{1}{2}+i\eps, \frac{1}{2}+i\eps \right\rangle, i=n-1,n-2,\ldots,1$$
For each of these $n-1$ subsequences,~$I'$, none of the four items fit in
 any of the bins opened before the arrival of the first item of~$I'$.
Hence, \ALG uses $3$~bins for each of the $n-1$ subsequences of
 $I_3$, $3(n-1)$ bins in total.
\OPTALG, on the other hand, will put the first two items of each
 subsequence in separate bins and pack the last two items in the same
 two bins.
This is allowed, since \ALG has opened one more bin than \OPTALG, already
 before the arrival of the first item of~$I_3$.
In this way, \OPTALG uses only $2(n-1)$ bins for~$I_3$.
Since both algorithms use only a constant number of bins for~$I_1$ and~$I_2$,
the ratio of \ALG bins to \OPTALG bins tends to~$3/2$ as $n$
tends to infinity.
\mbox{}\qed
\end{proof}

In classic bin covering, the input is as in bin packing,
and the goal is to assign items to bins so as to maximize the number of bins
whose total assigned size is at least~$1$.
For this problem, it is known that a simple greedy algorithm
(which assigns all items to the active bin until the total size assigned
to it becomes~$1$ or larger, and then it moves to the next bin
and defines it as active) has the best possible competitive ratio~$\frac 12$.
The negative result~\cite{CT88} is proven using inputs
where the first batch of items consists of a large number of very small items,
and it is followed by a set of large identical items of sizes close to~$1$
(where the exact size is selected based on the actions of the algorithm).
The total size of the very small items is strictly below~$1$,
so as long as large items were not presented yet, the value of any algorithm
is zero. An optimal offline solution packs the very small items
such that packing every large item results in a bin whose contents
have a total size of exactly~$1$. Thus, no algorithm can perform better
on any prefix, and this construction shows that the online-bounded ratio
is at most~$\frac 12$.

\section{Dual Bin Packing}
\label{sec:dual}
Dual bin packing is like the classic bin packing
problem, except that there is only a limited number,~$n$, of bins and
the goal is to pack as many items in these $n$ bins as possible.
Known results concerning the competitive ratio on accommodating sequences
can be used to obtain results for the online-bounded ratio.

\paragraph{Online-Bounded Ratio vs.\ Competitive Ratio on Accommodating Sequences.}
In general, accommodating sequences~\cite{BL99,BLN01} are defined to
be those sequences for which \OPT does not get a better result by
having more resources.
For the dual bin packing problem, accommodating sequences are
 sequences of items that can be fully accommodated in the $n$~bins,
 i.e., \OPT packs all items.

We show that, for a large class of algorithms for dual bin packing
containing First-Fit and Best-Fit, the online-bounded ratio is the
 same as the competitive ratio on accommodating sequences.
To show that this does not hold for all algorithms, we also give an
example of a~$\frac23$-competitive algorithm
 on accommodating sequences that has an online-bounded ratio of~$0$.

Dual bin packing is an example of a problem in a larger class of
problems which includes the seat reservation problem discussed
below. A problem is an \emph{accept/reject accommodating problem}
if algorithms can only accept or reject requests (and they act on
accepted requests only), the goal is to accept as many requests as
possible, and the accommodating sequences are those where \OPT
accepts all requests.

\begin{theorem}
\label{accSeq}
For any online algorithm \ALG for any accept/reject accommodating problem, the
competitive ratio of \ALG on accommodating sequences is equal to
the online-bounded ratio of \ALG on accommodating sequences.
\end{theorem}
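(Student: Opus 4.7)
The approach is to prove the pointwise identity $\OPT_{\ALG}(I) = \OPT(I)$ on every accommodating input~$I$. Once this is in hand, the condition $\ALG(I) \geq c\,\OPT_{\ALG}(I) - \alpha$ defining the online bounded ratio is literally the same as the condition $\ALG(I) \geq c\,\OPT(I) - \alpha$ defining the competitive ratio, when both are restricted to accommodating sequences, so the suprema of admissible~$c$ coincide.

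The inequality $\OPT_{\ALG}(I) \leq \OPT(I)$ is immediate: $\OPT_{\ALG}$ is obtained from $\OPT$ by imposing the extra prefix-domination constraint, so it cannot exceed the unrestricted optimum. For the reverse direction I would exhibit a single offline strategy that is a legitimate witness for $\OPT_{\ALG}$ and already achieves value $\OPT(I)$. The natural candidate is the strategy that accepts every request. On an accommodating~$I$ this is feasible by the very definition of ``accommodating,'' and its value equals $|I| = \OPT(I)$.

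To verify the prefix domination requirement, I would first observe the hereditary property that any prefix~$I'$ of an accommodating sequence is itself accommodating: restricting a feasible accept-all solution on~$I$ to the requests of~$I'$ yields a feasible solution on~$I'$ which accepts everything, so $\OPT(I') = |I'|$. The accept-all strategy thus achieves $|I'|$ on every prefix, and since $\ALG(I') \leq |I'|$ holds trivially for any accept/reject algorithm, accept-all dominates $\ALG$ on every prefix. This is exactly the admissibility condition in the definition of $\OPT_{\ALG}$, so we obtain $\OPT_{\ALG}(I) \geq |I| = \OPT(I)$, completing the identity.

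There is no substantive obstacle; the one point worth articulating carefully is the hereditary observation that prefixes of accommodating sequences are themselves accommodating. This is what makes the accept-all strategy satisfy the prefix domination requirement ``for free,'' and it rests on the natural monotonicity property — implicit in the definition of an accept/reject accommodating problem — that the restriction of a feasible accept set to a prefix remains feasible.
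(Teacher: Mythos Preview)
Your proposal is correct and follows essentially the same approach as the paper: both argue that on an accommodating sequence \OPT accepts every request, hence automatically dominates \ALG on every prefix, so $\OPT_{\ALG}(I)=\OPT(I)$ and the two ratios coincide. You spell out more carefully the hereditary observation that prefixes of accommodating sequences are themselves accommodating, which the paper leaves implicit; this is a welcome clarification rather than a different route.
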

\begin{proof}
For any accommodating sequence, \OPT rejects no items.
Thus, the requirement that at any point in time, \OPT has packed at
 least as many items as \ALG does not change the behavior of \OPT.
This means that, for accommodating sequences, the
competitive ratio and the online-bounded ratio are identical.
\mbox{}\qed
\end{proof}

Note that this result applies to all algorithms for dual bin packing.
Since any accommodating sequence is also a valid adversarial sequence
for the case with no restrictions on the sequences, we obtain the
following corollary of Theorem~\ref{accSeq}.

\begin{corollary}
\label{dualBPupper}
For any online algorithm \ALG for any accept/reject accommodating problem,
any upper bound on the
 competitive ratio of \ALG on accommodating sequences is also an upper
 bound on the online-bounded ratio of \ALG.
\end{corollary}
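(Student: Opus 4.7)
The plan is to derive the corollary as an immediate consequence of Theorem~\ref{accSeq}, together with the monotonicity observation that enlarging the class of sequences the adversary may choose can only make the worst-case ratio worse.

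First I would apply Theorem~\ref{accSeq}: on accommodating sequences, the competitive ratio of $\ALG$ coincides with its online bounded ratio, because on such sequences $\OPT$ accepts every request and therefore automatically dominates $\ALG$ on every prefix, making the ``bounded'' restriction vacuous and $\OPTALG$ agree with $\OPT$. Thus if $c$ is an upper bound on the competitive ratio of $\ALG$ on accommodating sequences, it is equally an upper bound on the online bounded ratio of $\ALG$ when the adversary is restricted to accommodating sequences.

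Next I would observe that the (unrestricted) online bounded ratio of $\ALG$ is the infimum of $\ALG(I)/\OPTALG(I)$ over \emph{all} input sequences $I$, in the appropriate asymptotic sense for a maximization problem. Since the accommodating sequences form a subclass of all sequences, the infimum over the larger class is no greater than the infimum over the smaller one, so the upper bound $c$ carries over intact. Equivalently, any witness sequence that realizes the competitive ratio upper bound on accommodating sequences is a valid adversarial choice for the unrestricted online bounded ratio, and on that sequence the values $\OPT(I)$ and $\OPTALG(I)$ coincide, so the ratio is preserved.

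The argument is routine with no genuine obstacle; the only detail worth attending to is bookkeeping around the asymptotic additive constant, which transfers unchanged from the competitive definition to the online bounded one because both ratios are evaluated on exactly the same witness and exactly the same offline value. I would therefore present the proof as essentially one paragraph appealing to Theorem~\ref{accSeq} and inclusion of the accommodating sequences into the full sequence class.
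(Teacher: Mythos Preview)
Your proposal is correct and essentially identical to the paper's approach: the paper also derives the corollary directly from Theorem~\ref{accSeq} together with the observation that accommodating sequences are a subclass of all sequences, so the upper bound carries over.
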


A {\em fair} algorithm for dual bin packing is an algorithm that never
 rejects an item that it could fit in a bin.
A {\em rejection-invariant} algorithm is an
 algorithm that does not change its behavior based on rejected items.

\begin{theorem}
\label{dualBPlower}
For any fair, rejection-invariant dual bin packing algorithm \ALG, the
 online-bounded ratio of \ALG equals the competitive ratio of \ALG on
 accommodating sequences.
\end{theorem}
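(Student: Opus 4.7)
My plan is to prove the two inequalities separately. Corollary~\ref{dualBPupper} already yields that the online bounded ratio of \ALG is at most its competitive ratio on accommodating sequences, so the remaining task is the reverse direction: for every input sequence~$I$, to show $\ALG(I) \geq c\cdot\OPTALG(I) - \alpha$, where $c$ is the competitive ratio of \ALG on accommodating sequences and $\alpha$ is the corresponding additive constant. I would achieve this by constructing, for each~$I$, an accommodating sequence~$I'$ on which \ALG and \OPT mirror the performance of \ALG and \OPTALG on~$I$; the accommodating-sequence hypothesis applied to~$I'$ then delivers the desired bound directly.

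For the construction, let $A$ be the set of items \ALG accepts on~$I$ and let $S$ be an optimal \OPTALG packing on~$I$, so $|A| = \ALG(I)$ and $|S| = \OPTALG(I)$. I take~$I'$ to be~$S$ listed in the order its items appear in~$I$. Since~$S$ is packable in $n$~bins, $I'$ is accommodating and $\OPT(I') = |S| = \OPTALG(I)$. To force $\ALG(I') = \ALG(I)$, I would choose~$S$ so that $A \subseteq S$. Under this choice, the items removed from~$I$ to form~$I'$ all lie outside~$A$, hence are items \ALG rejected on~$I$. Rejection-invariance then ensures \ALG reaches the same state at each item of~$I'$ as it does at the same item in~$I$: items of~$A$ are accepted as before, and items of $S \setminus A$ still fail to fit (by fairness they did not fit on~$I$, and the state is unchanged), so they remain rejected. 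Hence $\ALG(I') = |A| = \ALG(I)$, and the accommodating-sequence bound applied to~$I'$ gives $\ALG(I) \geq c\cdot\OPTALG(I) - \alpha$.

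The main obstacle is justifying the existence of an optimal \OPTALG packing~$S$ with $A \subseteq S$. Note that $A$ is itself a feasible \OPTALG candidate (it is packable by \ALG, and its prefix counts trivially satisfy the prefix-count constraint), so $\OPTALG(I) \geq |A|$. I would establish the required existence via an exchange argument: starting from any optimal packing $S^\ast$, for each $j \in A \setminus S^\ast$ I remove a carefully chosen $k \in S^\ast \setminus A$ and insert~$j$. Packability is preserved by selecting a~$k$ whose bin in a valid packing of~$S^\ast$ has enough room, together with the space freed by~$k$, to accommodate~$j$. The prefix-count condition is preserved either by choosing~$k$ whose position in~$I$ is no earlier than~$j$'s (so that counts at the intermediate prefixes only increase), or, when such a~$k$ does not exist, by exploiting slack $|S^\ast \cap P| > |A \cap P|$ at the relevant prefixes~$P$ inherited from the prefix-count constraint. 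Making this exchange work in full generality is the main technical step.
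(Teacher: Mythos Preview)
Your overall framework coincides with the paper's: reduce an arbitrary input~$I$ to an accommodating subsequence~$I'$ that contains every item \ALG accepted, so that rejection-invariance gives $\ALG(I')=\ALG(I)$ while $\OPT(I')=|I'|=\OPTALG(I)$. The gap is precisely where you say it is---the exchange argument---and your sketch of it does not yet work. Saying ``select a $k$ whose bin has enough room to accommodate~$j$'' is circular; you must exhibit such a~$k$. Moreover, your attempt to preserve the prefix-count condition by picking $k$ arriving no earlier than~$j$ points in the wrong direction: the prefix constraint actually forces the $\OPTALG$-only items to arrive \emph{before} the \ALG-only ones, not after. Finally, the whole prefix-count worry during intermediate swaps is unnecessary: once the final set~$S$ satisfies $A\subseteq S$, the prefix inequality $|S\cap P|\ge|A\cap P|$ is automatic for every prefix~$P$, so you need only guarantee packability and cardinality.

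The missing idea is a specific pairing that makes packability immediate. List the $\OPTALG$-only items as $o_1,\ldots,o_k$ and the \ALG-only items as $a_1,\ldots,a_l$, each in arrival order (so $l\le k$). For every~$i\le l$, the prefix constraint applied to the prefix ending at~$a_i$ shows that at least~$i$ of the $o$'s have already arrived, hence $o_i$ arrives strictly before~$a_i$. Because \ALG is fair, it rejected~$o_i$ only because $o_i$ fit in no bin; since bins only fill up, the later-accepted $a_i$ must be strictly smaller than~$o_i$. Thus replacing $o_i$ by $a_i$ in the very bin of~$o_i$ keeps the packing feasible. Doing this for $i=1,\ldots,l$ yields a packable set $S'$ with $|S'|=|S^\ast|=\OPTALG(I)$ and $A\subseteq S'$, which is exactly the~$S$ you need. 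This is the paper's argument; with this pairing in hand, your proof goes through verbatim.
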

\begin{proof}
The upper bound follows from Corollary~\ref{dualBPupper}. For the
lower bound, we show that, for any input sequence~$I$, there is an
accommodating sequence~$I'$ such that
$\OPT_{\ALG}(I)=\OPT_{\ALG}(I')=\OPT(I')$ and $\ALG(I)=\ALG(I')$.

Assume without loss of generality that for $I$ it holds that any
request is accepted either by $\ALG$ or by $\OPT_{\ALG}$ (or by
both of them). This can be assumed as \ALG is rejection-invariant,
and all requests rejected by both \ALG and $\OPT_{\ALG}$ can be
removed from $I$ without changing the action of \ALG, and thus the
action of $\OPT_{\ALG}$ is unchanged as well.

Let $\langle o_1, o_2, \ldots, o_k \rangle$ be the subsequence of~$I$
 consisting of the items that are packed by $\OPT_{\ALG}$ but not by \ALG.
Similarly, let $\langle a_1, a_2, \ldots, a_l \rangle$ be the
 subsequence of~$I$ consisting of the items packed by \ALG but not by
$\OPT_{\ALG}$. Clearly, $k \geq l$. Furthermore, for each $i$, $1
\leq i \leq l$, $o_i$ arrives before
 $a_i$, since at any point in time, $\OPT_{\ALG}$ must have packed at least as
 many items as \ALG.
This means that $o_i$ is larger than $a_i$, since \ALG is fair and
 packs $a_i$ after rejecting~$o_i$.

Let $I'$ be the subsequence of $I$ resulting from the removal of
$\langle o_1, o_2, \ldots, o_l \rangle$. We claim that $I'$ is
accommodating. The packing of $\OPT_{\ALG}$ consists of all items
of $I$ excluding $\langle a_1, a_2, \ldots, a_l \rangle$. This
packing is adapted for $I'$ as follows. For $i=1,2,\ldots,l$,
$o_i$ is replaced with $a_i$ (which is smaller). In this packing,
all items of $I'$ are packed. Therefore, $OPT(I')=\OPT_{\ALG}(I)$
(since in the packing we just defined, the number of packed items is
unchanged, and all items of $I'$ are packed). Recall that since
$I'$ is an accommodating sequence, it holds that $\OPT_{\ALG}(I')=\OPT(I')$.
Since \ALG rejects all items of $I \setminus I'$ and is
rejection-invariant, its behavior for $I$ and $I'$ is the same and
$\ALG(I')=\ALG(I)$. \mbox{}\qed
\end{proof}

One algorithm which is fair and rejection-invariant is First-Fit,
which packs each item in the first bin it fits in (and rejects it if
no such bin exists).  Another example of a fair, rejection-invariant
algorithm is Best-Fit, which packs each item in a most full bin that
can accommodate it. Worst-Fit is the algorithm that packs each item in
a most empty bin.

\begin{corollary}\label{BF58}
Both Best-Fit and First-Fit have online-bounded ratios of~$\frac58$.
Worst-Fit has an online-bounded ratio of~$\frac12$.
\end{corollary}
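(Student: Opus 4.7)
The plan is to derive the corollary directly from Theorem~\ref{dualBPlower}, which states that for any fair, rejection-invariant dual bin packing algorithm, the online bounded ratio equals the competitive ratio on accommodating sequences. Thus the work splits into two parts: verifying the hypotheses of Theorem~\ref{dualBPlower} for each of the three algorithms, and then quoting the known values of their competitive ratios on accommodating sequences.

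First I would check that First-Fit, Best-Fit, and Worst-Fit all satisfy the hypotheses. Each is \emph{fair}: none of them rejects an item that fits somewhere, since First-Fit tries bins in index order, Best-Fit tries a most-full feasible bin, and Worst-Fit tries a most-empty bin, and each rejects only when the item fits in no bin at all. Each is also \emph{rejection-invariant}: the decision for the next accepted item depends only on the current load profile of the $n$ bins, not on which earlier items were rejected. So Theorem~\ref{dualBPlower} applies to all three algorithms, and the online bounded ratio of each equals its competitive ratio on accommodating sequences.

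Second, I would invoke the known competitive ratios on accommodating sequences from~\cite{BFLN03}: First-Fit and Best-Fit achieve $\frac58$, and Worst-Fit achieves $\frac12$. Combined with Theorem~\ref{dualBPlower}, this yields the three bounds claimed by the corollary. As an additional sanity check, the upper bound direction also follows from Corollary~\ref{dualBPupper}: any accommodating-sequence upper bound carries over to the online bounded ratio, so we only need the tight constructions from~\cite{BFLN03} to match the lower bounds.

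The main obstacle is essentially bookkeeping rather than mathematical: confirming that the competitive ratios cited from~\cite{BFLN03} are stated in the strict form needed (asymptotic, with the same additive convention we use here), and that the tight instances they construct can indeed be taken to be accommodating. Once these citations are in place, the corollary is just the composition of Theorem~\ref{dualBPlower} with the quoted values.
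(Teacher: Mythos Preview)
Your proposal is correct and follows essentially the same approach as the paper: apply Theorem~\ref{dualBPlower} after checking that the three algorithms are fair and rejection-invariant, then cite the accommodating-sequence competitive ratios from~\cite{BFLN03}. The one bookkeeping detail the paper spells out that you gloss over is that the $\frac58$ upper bound in~\cite{BFLN03} is stated for First-Fit, so for Best-Fit one adds the standard observation that any First-Fit lower-bound construction can be permuted to force the same packing from any Any-Fit algorithm.
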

\begin{proof}
For Best-Fit, the corollary follows from Theorem~\ref{dualBPlower},
since Corollary~2 and Theorem~4 in~\cite{BFLN03} imply that
Best-Fit's competitive ratio on accommodating sequences is~$\frac58$.
Theorem~4 in~\cite{BFLN03} does not specifically mention Best-Fit, but
it is well known that any negative result for the competitive ratio
of First-Fit holds for any Any-Fit algorithm.
This is because, for any sequence~$I$, an adversary can force any
Any-Fit algorithm to produce the First-Fit packing of~$I$ by
permuting~$I$ such that the items packed by First-Fit are given
first, in the order they appear in the bins in the First-Fit packing.

For First-Fit, the corollary follows from Theorem~\ref{dualBPlower},
since Corollary~1 and Theorem~4 in~\cite{BFLN03} imply that
First-Fit's competitive ratio on accommodating sequences is~$\frac58$.

For Worst-Fit, the corollary follows from Theorem~\ref{dualBPlower}, since
Theorems~1 and~5 in~\cite{BFLN03} imply that
Worst-Fit's competitive ratio on accommodating sequences is~$\frac12$.
\mbox{}\qed
\end{proof}

For the first part of the corollary below, note that the fairness restriction gives
rise to a lower bound, i.e., a guarantee of at least a certain
competitive ratio.
\begin{corollary}\label{cor12}
Any fair, rejection-invariant dual bin packing algorithm has an
on\-line-bound\-ed ratio of at least~$\frac12$.
Any dual bin packing algorithm has an
online-bounded ratio of at most~$\frac67$.
\end{corollary}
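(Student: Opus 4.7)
The plan is to reduce both halves of the corollary to existing results in \cite{BFLN03} via the two structural statements already proved, Corollary~\ref{dualBPupper} and Theorem~\ref{dualBPlower}; no new calculations are required.

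For the lower bound of~$\frac12$, I would start from Theorem~\ref{dualBPlower}: for any fair, rejection-invariant algorithm \ALG, the online bounded ratio coincides with the competitive ratio of \ALG on accommodating sequences. I would then cite the result of \cite{BFLN03} stating that every fair dual bin packing algorithm is at least $\frac12$-competitive on accommodating sequences (this is the same bound that is attained with equality by Worst-Fit in Corollary~\ref{BF58}, so there is no room for improvement under the fairness assumption alone). Composing these two facts gives the $\frac12$ lower bound.

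For the upper bound of~$\frac67$, my plan is to invoke Corollary~\ref{dualBPupper}, which transfers any upper bound on the competitive ratio of \ALG on accommodating sequences to an upper bound on its online bounded ratio. I would then cite the $\frac67$ upper bound from \cite{BFLN03} that holds against any, possibly randomized, algorithm on accommodating sequences (this is the standard Yao-style bound established there), and conclude immediately.

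The main potential obstacle, though a mild one, is verifying that Corollary~\ref{dualBPupper} really applies in the randomized setting as well, since it was stated for deterministic algorithms. I expect this to go through unchanged: the proof of Theorem~\ref{accSeq} on which Corollary~\ref{dualBPupper} rests uses only that on an accommodating sequence \OPT packs every request, so the prefix restriction on \OPT is vacuous. Consequently \OPT and $\OPT_{\ALG}$ coincide on such sequences, and the equality of the two ratios holds in expectation over \ALG's internal coins. With this checked, the $\frac67$ upper bound carries over and the corollary follows.
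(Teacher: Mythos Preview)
Your proposal is correct and follows essentially the same approach as the paper: the paper's proof simply says the first part follows from Theorem~\ref{dualBPlower} combined with Theorem~1 in~\cite{BFLN03}, and the second part from Corollary~\ref{dualBPupper} combined with Theorem~3 in~\cite{BFLN03}. Your added check that Corollary~\ref{dualBPupper} extends to randomized algorithms is a point the paper glosses over, and your justification (that on accommodating sequences the prefix constraint on \OPT is vacuous, so the argument holds in expectation) is correct and worth including.
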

\begin{proof}
The first part follows from Theorem~\ref{dualBPlower} above combined with
Theorem~1 in~\cite{BFLN03}.
The second part follows from Corollary~\ref{dualBPupper} above combined with
 Theorem~3 in~\cite{BFLN03}.
\mbox{}\qed
\end{proof}

The algorithm Unfair-First-Fit (\UFF) defined in~\cite{ABEFLN02} is
 designed to work well on accommodating sequences.
Whenever an item larger than $\frac12$ arrives, \UFF rejects the item
 unless it will bring the number of accepted items below~$\frac23$ of
 the total number of items that are accepted by an optimal solution
of the prefix of items given so far (for an accommodating sequence
this is the number of items in the prefix). Accepted items are
packed using First-Fit. The competitive ratio of \UFF on
accommodating sequences is~$\frac23$~\cite{ABEFLN02}. We show
that, in contrast to Theorem~\ref{dualBPlower}, \UFF has an
 online-bounded ratio of~$0$.

\begin{theorem}\label{UFF0}
Unfair-First-Fit has an online-bounded ratio of~$0$.
\end{theorem}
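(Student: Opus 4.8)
The goal is to show that Unfair-First-Fit (\UFF) has online bounded ratio~$0$. Since the online bounded ratio of a maximization algorithm is the supremum of all~$c$ with $\ALG(I) \geq c\,\OPT_{\ALG}(I)$ for all~$I$, proving the ratio is~$0$ means exhibiting, for every $\varepsilon > 0$, a sequence~$I$ on which $\UFF(I) < \varepsilon \cdot \OPT_{\UFF}(I)$. Equivalently, I would construct a family of sequences where $\OPT_{\UFF}$ grows without bound while $\UFF$ packs only a bounded (ideally constant) number of items. The leverage point is \UFF's defining behavior: it \emph{rejects} large items (size $> \tfrac12$) whenever doing so keeps the accepted count at or above $\tfrac23$ of the prefix-optimum. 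The online bounded restriction on $\OPT_{\UFF}$ only forces $\OPT_{\UFF}$ to match \UFF on every prefix — and if \UFF accepts very few items, this constraint is extremely weak, leaving $\OPT_{\UFF}$ almost free to behave like an unrestricted optimum.

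\textbf{The construction I would use.} The plan is to feed \UFF a long run of large items so that its unfairness repeatedly triggers rejections. Concretely, I would present many items each of size just above~$\tfrac12$ (say $\tfrac12 + \delta$ for small~$\delta$), enough that no two fit in a bin. On such a prefix, the prefix-optimum packs one large item per bin, so the optimal count equals the number of bins' worth that fit. \UFF, comparing its accepted count against $\tfrac23$ of this optimum, will reject a constant fraction of the large items — in fact, by keeping the items coming, I can arrange that after an initial burst, \UFF accepts essentially nothing further while the prefix-optimum keeps rising. The key quantitative point is that \UFF's acceptance rule caps its accepted large items at roughly $\tfrac23$ of what is feasible on each prefix, and by choosing the sizes so that the only items presented are large, the ``optimal solution of the prefix'' that \UFF measures against can be made to equal~$n$ (one item per bin) while \UFF voluntarily holds back. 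Meanwhile $\OPT_{\UFF}$, constrained only to weakly dominate \UFF on prefixes (which is easy when \UFF accepts so little), can pack one large item in each of the $n$ bins, giving $\OPT_{\UFF}(I) = n$.

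\textbf{Making the ratio vanish.} To drive the ratio to~$0$ rather than to some positive constant, I would exploit that the bound in \UFF's rule is about the \emph{count} of accepted items, not their sizes, and that $\OPT_{\UFF}$ can fill the bins with small items on later prefixes to inflate its count dramatically while \UFF, still throttled by its unfairness threshold on the earlier large items, cannot recover. The idea is a two-phase sequence: a phase of large items that locks \UFF into accepting only a small number and simultaneously leaves $\OPT_{\UFF}$'s prefix-domination constraint slack, followed by a phase of very many tiny items that $\OPT_{\UFF}$ can pack in enormous quantity (because tiny items pack densely) but that \UFF cannot fit, having already committed its bins — or that \UFF's rule still refuses. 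By scaling the number of tiny items independently of the large-item phase, $\OPT_{\UFF}(I)$ can be made arbitrarily large relative to the fixed or slowly-growing $\UFF(I)$, so the ratio tends to~$0$.

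\textbf{The main obstacle.} The delicate part is verifying that the online bounded restriction genuinely permits $\OPT_{\UFF}$ to make the ``good'' choices I want on \emph{every} prefix, not just the final sequence. I must check prefix-by-prefix that $\OPT_{\UFF}$'s intended packing never falls below \UFF's accepted count at any intermediate point — this is where the accept/reject accommodating structure and the weakness of \UFF's acceptance interact, and it is the step most prone to an off-by-one or a prefix where \UFF temporarily accepts more than expected. I would handle this by choosing the sizes and the ordering so that \UFF's accepted count is monotone and small throughout, making the domination constraint trivially satisfiable, and then confirm that the tiny-item phase lets $\OPT_{\UFF}$'s count explode while respecting the $n$-bin capacity. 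Once the prefix feasibility of $\OPT_{\UFF}$'s packing is established, the ratio computation is immediate and $c$ can be pushed below any~$\varepsilon$.
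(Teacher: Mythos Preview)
Your two-phase outline is the right shape, but the concrete phase-1 construction you propose does not work, and the gap is exactly at the point you flagged as ``delicate.''

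If phase~1 consists solely of items of size $\tfrac12+\delta$, then \UFF accepts two out of every three of them (its threshold forces acceptance whenever the accepted count would drop strictly below $\tfrac23$ of the prefix optimum), so after the first $n$ bins' worth of such items \UFF has roughly $\tfrac{2n}{3}$ bins each loaded to $\tfrac12+\delta$ and $\tfrac{n}{3}$ bins still empty. Meanwhile the only thing $\OPT_{\UFF}$ can do with these items is also place one per bin; to dominate \UFF on every prefix it will end up with all $n$ bins loaded to $\tfrac12+\delta$. The upshot is that after phase~1, \UFF has \emph{more} free capacity than $\OPT_{\UFF}$, not less. In phase~2 the tiny items are small (so \UFF's unfairness rule does not apply and it reverts to First-Fit), and \UFF will happily pack them into its half-empty and empty bins. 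The ratio does not tend to~$0$; it stays bounded away from~$0$.

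The missing idea, which the paper's construction supplies, is that phase~1 must contain \emph{two} kinds of large items: a ``very large'' kind (size $1-\eps$) that \UFF is forced to accept because it arrives when \UFF is below threshold, and a ``barely large'' kind (size $\tfrac12+\eps$) that \UFF voluntarily rejects. Interleaving them as $\langle 1-\eps,\ \eps,\ \tfrac12+\eps\rangle$ repeated $n$ times makes \UFF accept each $1-\eps$ and each $\eps$, filling every bin exactly to~$1$, while rejecting every $\tfrac12+\eps$. Because \UFF rejects one item per triple, $\OPT_{\UFF}$ is permitted to reject one item per triple as well, and it chooses to reject the $1-\eps$ items (after the first) and keep the $\tfrac12+\eps$ items instead. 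This swap is the crux: after phase~1, \UFF's bins are completely full, but $\OPT_{\UFF}$ has $n-1$ bins loaded only to about $\tfrac12$. A flood of $\eps$-sized items in phase~2 then drives the ratio to~$0$. Your sketch never creates this asymmetry in remaining capacity, so it cannot be completed as stated.
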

\begin{proof}
For a fixed $n \geq 2$, consider the following input sequence $I$,
for some small $\eps>0$, where $\eps=\frac{1}{2N}$ for some large
integer $N$.  The sequence starts with $\langle \eps, \eps,
\frac12+\eps, 1-2\eps \rangle$, followed by $\langle
\frac12\rangle$ repeated $2(n-1)$ times, and $\langle \eps
\rangle$ repeated $N-3$ times. The first four items can be packed
into two bins. Unfair-First-Fit rejects the third item and accepts
the other three items among the first four items. Moreover,
Unfair-First-Fit packs those three items into one bin. Next,
Unfair-First-Fit accepts and packs all items of size $\frac 12$
into its $n-1$ remaining bins. It is forced to reject all
remaining items (each of which has size $\eps$). For
$\OPT_{\UFF}$, it is possible to reject the fourth item instead of
the third one, and as a result, it can pack all other items (it
also packs the items of sizes $\frac 12$ in pairs into bins of
indices $2,3,\ldots,n$). After all items of size $\frac 12$ have
been presented, $\OPT_{\UFF}$ has empty space, which is filled by
small items, resulting in the ratio
$\frac{\UFF(I)}{\OPT_{\UFF}(I)}=\frac{2n+1}{2n+N-2}$, tending to
$0$ as $N$ grows to inifinity.
\mbox{}\qed
\end{proof}

\section{Unit Price Seat Reservation}
\label{sec:unit}
In the seat reservation problem, there is a train with $n$ seats
traveling from station $1$ to station $k$.
The input is a sequence of requests for getting a seat from a station
$i$ to a station $j>i$.
Two requests can be assigned the same seat, if the end station of one
request is no larger than the start station of the other request.
Algorithms for the problem are required to be fair, i.e., a request
cannot be rejected if at least one seat can accommodate the request.
In the unit price version, the objective is to maximize the number of
accepted requests (requests that are assigned a seat), and in the
proportional price version, the objective is to maximize the total
length (end station minus start station) of the accepted requests.

Since both versions of the seat reservation problem
have competitive ratios $\Theta(1/k)$, the problem
has often been studied using the \emph{competitive ratio on accommodating
sequences}, which for the seat reservation problem
restricts the input sequences considered to those
where $\OPT$ could have accepted all of the requests.
For proportional price seat reservation, the competitive ratio remains
$\Theta(1/k)$, even on accommodating sequences.
However, for the unit price version, the optimal competitive ratio on
accommodating sequences is $\frac12$~\cite{BL99,BBEFJLLS03}.
Thus, by Theorem~\ref{accSeq}, the optimal online-bounded ratio on
accommodating sequences is $\frac12$.

On general sequences, the optimal online-bounded ratio for unit price
seat reservation is essentially as bad as the competitive ratio.
This is true, even though both the original proof,
showing that no deterministic
online algorithm is more than
$\frac{8}{k+5}$-competitive~\cite{BL99}, and the proof improving this
to $\frac{4}{k-2\sqrt{k-1}+4}$~\cite{MO10}, used an optimal
offline algorithm which rejected some requests before
the online algorithm did. The main idea in these proofs was that
the adversary could
give small request intervals which \OPT could place differently from
the algorithm, allowing it to reject some long intervals and still
be fair. Rejecting long
intervals allowed it to accept many short intervals which the algorithm
was forced to reject.
By using small intervals involving only the last few stations,
one can force the online algorithm to reject intervals
early. Then, giving nearly the same sequence as for the $\frac{8}{k+5}$
bound, using two fewer stations,
\OPT can still reject the same long intervals and do just as badly
asymptotically.
Note that in the proof, the $[k-3,k-2)$ intervals are used both
in the initial part, targeting the last few stations, and in the main
construction that follows.

\begin{theorem}
\label{seatreservation}
No deterministic online algorithm for the unit price seat reservation
problem has an online-bounded ratio of more than~$\frac{11}{k+7}$.
\end{theorem}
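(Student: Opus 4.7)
The plan is to adapt the $\frac{8}{k+5}$ lower bound of Boyar and Larsen~\cite{BL99} to the online bounded setting. The key difficulty is that $\OPT_A$ cannot reject intervals freely: on every prefix it must have accepted at least as many intervals as the fair online algorithm $A$. So whenever the original proof relied on $\OPT$ rejecting a long interval in order to accept many short ones later, I must justify that $\OPT_A$ may do the same without violating the prefix invariant $\OPT_A(I') \geq A(I')$.

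First, I would prepend a short prelude of intervals of the form $[k-3, k-2)$. Since these span only the last two stations, every fair algorithm must accept them, and so must $\OPT_A$. After a suitable number of copies, every seat of $A$ becomes occupied on segment $[k-3, k-2)$, so any subsequently issued interval whose span crosses station $k-3$ is forced-rejected by $A$, irrespective of $A$'s seat-assignment policy. This is the mechanism by which the online algorithm is compelled to ``reject intervals early.''

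Next, I would replay the Boyar--Larsen adversarial sequence on the reduced track $\{0, 1, \ldots, k-3\}$ of $k-2$ stations, reusing the $[k-3, k-2)$ intervals of the prelude so that they are counted only once. Because $A$ has already been forced, by seat occupancy, to reject exactly the long intervals that the original $\OPT$ rejected voluntarily, $\OPT_A$ inherits the slack to reject them too without ever dropping below $A$ on any prefix. Hence $\OPT_A$ may behave as the unconstrained $\OPT$ of the original proof on the shortened track. Summing the accepted counts from both phases, and exploiting the fact that the saturation of segment $[k-3, k-2)$ enables $\OPT_A$ to pick up extra short intervals per block that $A$ cannot place, yields the asymptotic ratio $\frac{11}{k+7}$ rather than the $\frac{8}{k+3}$ one would naively get from replaying the original proof verbatim on $k-2$ stations.

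The main obstacle will be a step-by-step verification of the prefix invariant $\OPT_A(I') \geq A(I')$: I need to check that throughout the prelude $\OPT_A$ can commit to seats compatible with its later emulation of the original $\OPT$, and that in the main phase every long interval rejected by the original $\OPT$ is in fact already rejected by $A$ at the same step or earlier, so that $\OPT_A$'s matching rejection is permitted under online bounded optimality. Tuning the number of prelude copies, the block size, and the number of seats to balance these counts against each other is precisely what produces the constants $11$ and $k+7$, rather than the nearby-but-incorrect numbers one might guess from a hand-wave.
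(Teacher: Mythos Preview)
Your prelude idea does not create the asymmetry you need. If you give enough copies of $[k-3,k-2)$ to occupy every seat of $A$ on that segment, then $\OPT_A$ --- which is also fair and must accept every one of these intervals --- has every seat occupied on $[k-3,k-2)$ as well. Any later interval crossing station $k-3$ is therefore force-rejected by \emph{both} algorithms, so no slack is created: on every prefix the two accept counts stay equal. More fundamentally, you have the direction of the slack backwards. For $\OPT_A$ to be permitted to reject a long interval that $A$ accepts, $\OPT_A$ must already be \emph{ahead} in accepted count at that moment. That requires an earlier phase in which $\OPT_A$ accepts intervals that $A$ rejects, not the other way around. A prelude that causes $A$ to reject gives $A$ free seats later; it does not give $\OPT_A$ credit.

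The paper achieves the needed credit by a different, adaptive mechanism. It first gives $n/2$ copies of $[k-3,k-2)$ together with $n/2$ copies of $[k-1,k)$, two \emph{disjoint} unit intervals. The online algorithm must commit to either pairing or separating them across seats; the adversary then issues either $[k-2,k)$ and $[k-3,k-1)$ intervals, or $[k-3,k)$ intervals, whichever choice forces $A$ to reject at least $n/4$ of them. $\OPT_A$, knowing the continuation, places the initial intervals the other way and accepts everything, so after this phase $\OPT_A$ is strictly ahead by at least $n/4$. Only then do the long $[1,k-2)$ intervals arrive, and $\OPT_A$ may legally reject $n/4$ of them while $A$ accepts, setting up the $\Theta(nk)$ gain from the final short intervals. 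Your single-interval prelude cannot substitute for this two-type adaptive trap, and without it the prefix invariant blocks every long-interval rejection by $\OPT_A$.
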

\begin{proof}
Assume the number of seats~$n$ is divisible by~$4$.
We compare an online algorithm~$A$ to
an algorithm~$O$, which follows all of the rules for~$\OPT_A$.
The adversary gives $n/2$ pairs of requests for $[k-3,k-2)$ and $[k-1,k)$.
Suppose the online algorithm,~$A$, places these intervals
such that after these requests
there are exactly $r$ seats which contain two intervals.
One of two cases will occur:
\begin{itemize}
\item Case~1: $r\geq n/4$, or
\item Case~2: $r<n/4$.
\end{itemize}
If Case~1 occurs, the algorithm~$O$ places all of the first $n$ intervals on
separate seats. Next there will be $n/2$ requests to $[k-2,k)$,
followed by $n/2$ requests to
$[k-3,k-1)$. $A$ will reject $r\geq n/4$ of these requests to
$[k-3,k-1)$, but $O$
accepts them all. This allows $O$ to later reject $n/4$ intervals
that $A$ accepts.

If Case~2 occurs, $O$ pairs up the first $n$ intervals, placing
two on each of the first $n/2$ seats.
Next there will be $n/2$ requests
for $[k-3,k)$ intervals, and $A$ rejects $n/2-r\geq n/4$ of them, but $O$
accepts them. This also allows $O$ to later reject $n/4$ intervals
that $A$ accepts.

Note that in both cases, $A$ now has at least $n/4$ seats with
the interval $[k-3,k-2)$ free, while $O$ has none.
Let $a$ be the number of intervals accepted by $A$ up to
this point and $o$ be the number of intervals accepted by~$O$. The
value~$a$ is at most $7n/4$ in Case~1 and at most $5n/4$ in Case~2.
The value of $o$ is $2n$ in Case~1 and $3n/2$ in Case~2.

Now there will be $n/4$ requests for $[1,k-2)$, that $A$ accepts
and $O$ rejects. These are followed by $3n/4$ requests for
$[1,k-3)$ that both $A$ and $O$ accept. Finally, there are
$n/4$ requests for each of the intervals $[i,i+1)$, $1\leq i\leq k-4$.
$A$ cannot accept any of these $n(k-4)/4$ requests, but $O$
accepts all of them.

For this adversarial sequence,~$I_A$,
\begin{eqnarray*}
\frac{A(I_A)}{\OPT_A(I_A)} & \leq & \frac{A(I_A)}{O(I_A)} \\
& \leq & \frac{a+n}{o+3n/4+n(k-4)/4} \\
& \leq &  \frac{7n/4+n}{2n+3n/4+n(k-4)/4} \\
& = &  \frac{11}{k+7} \ ,
\end{eqnarray*}
where the third inequality holds because in both cases
we have $o \geq a+\frac n4$ and $a\leq \frac{7n}{4}$,
so the bounds in Case~1 give the larger result.
\mbox{}\qed
\end{proof}

Using a similar proof, one can show that the online-bounded ratios
of First-Fit and Best-Fit are at most~$\frac{5}{k+1}$. The major
difference is that in the first part, First-Fit and Best-Fit each
reject $n/2$ intervals, so in the second part, $O$ can also reject
$n/2$ intervals.
Since any online algorithm for the unit price problem is $2/k$-competitive,
any online algorithm for the unit price problem has an
online-bounded ratio of at least~$2/k$.

\bibliography{refs}
\bibliographystyle{spmpsci}

\end{document}